\newcommand{\nat}{{\mathbb N}}
\newcommand{\pre}{\mathsf{pre}}
\newcommand{\amodel}{\ensuremath{\mathsf{M}}}
\newcommand{\Actions}{\ensuremath{\mathsf{S}}}
\newcommand{\action}{\ensuremath{\mathsf{s}}}
\newcommand{\actiona}{\action}
\newcommand{\actionb}{\ensuremath{\mathsf{t}}}
\newcommand{\imp}{\rightarrow}
\newcommand{\eq}{\leftrightarrow}
\newcommand{\et}{\wedge}
\newcommand{\Et}{\bigwedge}
\newcommand{\knows}{K}
\newcommand{\agent}{a}
\begin{document}
\begin{frontmatter}
  \title{Dynamic Epistemic Logic with Communication Actions} 
  \author{Mario Roberto Folhadela Benevides\thanksref{myemail}}
  \address{PESC/COPPE - Inst.\ de Matem\'atica/DCC\\ Federal University of Rio de Janeiro\\ Rio de Janeiro, Rio de Janeiro}
  \author{Isaque Macalam Saab Lima\thanksref{coemail}}
  \address{PESC/COPPE\\ Federal University of Rio de Janeiro\\ Rio de Janeiro, Rio de Janeiro}
  \thanks[myemail]{Email:\href{mailto:mario@cos.ufrj.br} 
      {\texttt{\normalshape mario@cos.ufrj.br}}} 
  \thanks[coemail]{Email:\href{mailto:isaque@cos.ufrj.br} 
      {\texttt{\normalshape isaque@cos.ufrj.br}}}
\begin{abstract} 
  This work proposes a Dynamic Epistemic Logic with Communication Actions that can be performed concurrently. Unlike Concurrent Epistemic Action Logic introduced by Ditmarsch, Hoek and Kooi \cite{hwbcdel}, where the concurrency mechanism is the so called {\it true concurrency}, here we use an approach based on process calculus, like CCS and CSP, and Action Models Logic. Our approach makes possible the proof of soundness, completeness and decidability, different from the others approaches. We present an axiomatization and show that the proof of soundness, completeness and decidability can be done using a reduction method.
\end{abstract}
\begin{keyword}
  Epistemic Logic, Dynamic Logic, Action Models, Dynamic Epistemic Logic, Concurrent Actions, Communication Action.
\end{keyword}
\end{frontmatter}
\section{Introduction}\label{intro}

  Multi-Agent Epistemic Logic has been investigated in Computer Science \cite{halpern95} to represent and reason about agents (or groups of agents') knowledge and beliefs. Dynamic Logic  aims to reason about actions (programs) and their effects \cite{harel}. Dynamic Epistemic Logic \cite{hwb} is conceived to reason about actions that change agents (or groups of agents') epistemic state, i.e., actions which change agent's knowledge and beliefs.
  
  The first Dynamic Epistemic Logic was proposed independently by \cite{plazza} and \cite{jelle} it is called Public Announcement Logic(PAL) . There are many other approaches but the one that is used in this work is the Action Model Logic proposed by  \cite{baltag04,baltag98}. 
  
  Concurrent Dynamic Epistemic Logic was introduced in \cite{hwbcdel} and it was intended to extend Epistemic Action Logic proposed by Van Ditmarsch in \cite{hcdel} with concurrent epistemic actions. In this extension they use a mechanism to deal with concurrency called "true concurrency" which is inspired on the Concurrent  Propositional Dynamic Logic proposed by Peleg in \cite{DPEL87}. An interesting work, entitled Logics of Communication and Knowledge, presented in \cite{floor}, proposes a framework for modeling message passing situations that combines properties of dynamic epistemic semantics and history-based approaches, which consists of Kripke models with records of sent messages in their valuations. Another work that inspired us to represent communication actions as private epistemic action is \cite{jelle}.

  Example: Consider that there are two students waiting for a message from a teacher to send back the homework and that one student does not know if the other received or responded the message. To represent this we need to model the following actions: teacher sending the message (send action), each student receiving (receive action) and responding (response action) the message independently. We also need to guarantee that: the receive action can not be performed before the send action, the response action can not be performed before the receive action and the students actions can be performed concurrently. Can we model this using Action Models Logic? Since this is a very small example one can argue that this can done by using pre conditions and non deterministic choice to model all the possible paths. Now imagine the same situation with 100 students. It would be  not so easy to model.

   This work proposes a way to deal with concurrency and communication with Dynamic Epistemic Logic.  We use an approach based on action models and process calculus, like CCS and CSP, which allow us to prove soundness, completeness and decidability. Different from  \cite{hwbcdel}, that implements concurrency on top of Epistemic Action Logic, we extends Action Models to deal with concurrency and communication. The proofs of soundness, completeness and decidability can be done using a reduction method.
   
    
    In order to facilitate the proof of soundness, completeness, and decidability we restricted our concurrency approach. We do not deal with "true concurrency" like in \cite{hwbcdel}. Instead, we adopt the interleaving (non-deterministic choices of all possible paths) approach used in process algebras like CCS and CSP. Since we are based on Action Models we can use the pre-conditions to restrict actions that must be executed after another action. We do not deal with Common Knowledge, because this would make the proofs a little more tricky.
    
  
  In sections \ref{mael}, \ref{aml} and \ref{eal} we give a brief introduction to Multi-agent Epistemic Logic, Action Model Logic and Concurrent Dynamic Epistemic Logic. Next we present the Dynamic Epistemic Logic that we propose in this paper. The last section is the conclusion.

\section{Multi-Agent Epistemic Logic}\label{mael}

  This section presents the Multi-Agent Epistemic Logic ${\bf S5_a}$. All the definitions and theorems of this section are based on \cite{hwb}.
  
  \subsection{Language and Semantics}

    \begin{definition}\label{el-lang} 
      The Epistemic language consists of a countable set $\Phi$ of proposition symbols,  a finite set ${\cal A}$ of agents,  a modality $K_a$ for each agent $a$ and the boolean connectives $\neg$ and $\land$. The formulas are defined as follows:
      \[\varphi ::= p \mid \top \mid \neg \varphi \mid \varphi_1 \wedge \varphi_2  \mid K_a \varphi  \]

      where $p \in \Phi$,  $a \in {\cal A}$.
    \end{definition}

    \begin{definition}\label{el-frame} 
      A multi-agent epistemic \emph{frame}  is a tuple $\mathcal{F}= (S, R_a)$ where:
      \begin{itemize}
        \item $S$ is a non-empty set of states;
        \item $R_a$ is a binary relation over $S$, for each agent $a \in {\cal A}$;
      \end{itemize}
    \end{definition}

    \begin{definition}\label{el-model} 
      A multi-agent epistemic \emph{model} is a pair $\mathcal{M}= ({\cal F}, {\bf V})$, where ${\cal F}$ is a  frame and ${\bf V}$  is a valuation 
      function ${\bf V} : \Phi \to 2^S$. We call a rooted  multi-agent epistemic model $(\mathcal{M}, s)$ an epistemic state.
    \end{definition}

    \begin{definition}\label{el-satisfaction} Given a multi-agent epistemic model  ${\cal M} = \langle (S, R_a), {\bf V} \rangle$. The notion of satisfaction ${\cal M},s \models \varphi$ is defined as follows:   
      \begin{description}
        \item[1.] $ {\cal M},s \models p \text{ iff } s \in {\bf V}(p)$
        \item[2.] $ {\cal M},s \models \neg \phi \text{ iff } {\cal M},s \not \models \phi $
        \item[3.] $ {\cal M},s \models \phi \land \psi \text{ iff } {\cal M},s \models \phi \text{ and } {\cal M},s \models \psi $
        \item[4.] $ {\cal M},s \models K_a \phi  \text{ iff } \text{for all } s' \in S: s R_a s' \Rightarrow {\cal M},s' \models \phi $
      \end{description}   
      
    \end{definition}

  \subsection{Axiomatization}\label{mael-ax}

    \begin{enumerate}
      \item  {\it All instantiations of propositional tautologies}, 
      \item   $K_a(\varphi \rightarrow \psi) \rightarrow (K_a\varphi \rightarrow K_a\psi)$,
      \item   $K_a \varphi \rightarrow \varphi$, 
      \item   $K_a \varphi \rightarrow K_a K_a \varphi~~~~~~~~~(+~introspection)$,
      \item   $\neg K_a \varphi \rightarrow K_a \neg K_a \varphi~~~~~(-~introspection)$,
    \end{enumerate}

    \centerline{\bf Inference Rules\/}
    M.P. $\varphi, \varphi\rightarrow \psi / \psi$ \qquad U.G. $\varphi/ K_a\varphi$ \qquad   

    \begin{theorem} ${\bf S5_a}$ is sound and complete w.r.t its semantics. \end{theorem}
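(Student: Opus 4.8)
The plan is to handle the two directions separately, reading ${\bf S5_a}$ as the logic of the class of multi-agent epistemic models in which every $R_a$ is an equivalence relation (reflexive, transitive and Euclidean), since this is precisely the frame class picked out by axioms 3, 4 and 5.

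For soundness I would argue by induction on the length of a derivation that every theorem is valid on every such model. The base cases are the axiom schemes: the propositional tautologies and axiom 2 (the distribution axiom) are valid on arbitrary frames by a direct unfolding of Definition~\ref{el-satisfaction}; axiom 3 uses reflexivity of $R_a$, axiom 4 uses transitivity, and axiom 5 uses the Euclidean property. For the inference rules I would check that Modus Ponens and Universal Generalization preserve validity, the latter being immediate from clause 4 of the satisfaction definition. This direction is routine and presents no real difficulty.

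For completeness I would use the canonical model method. First I would define maximal consistent sets and prove Lindenbaum's Lemma, that every ${\bf S5_a}$-consistent set of formulas extends to a maximal consistent one. Then I would build the canonical model ${\cal M}^c$ whose states are the maximal consistent sets, with valuation ${\bf V}^c(p) = \{\Gamma : p \in \Gamma\}$ and accessibility $\Gamma R_a \Delta$ iff $\{\varphi : K_a\varphi \in \Gamma\} \subseteq \Delta$. The heart of the argument is the Truth Lemma, stating that ${\cal M}^c, \Gamma \models \varphi$ iff $\varphi \in \Gamma$, proved by induction on $\varphi$; the only nontrivial case is $K_a\varphi$, where the right-to-left direction follows from the definition of $R_a$ and the left-to-right direction rests on the Existence Lemma: if $K_a\varphi \notin \Gamma$ then $\{\psi : K_a\psi \in \Gamma\} \cup \{\neg\varphi\}$ is consistent and, by Lindenbaum, extends to a maximal consistent $\Delta$ with $\Gamma R_a \Delta$ and $\varphi \notin \Delta$.

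Finally I would verify that the canonical relations are equivalence relations, so that ${\cal M}^c$ is genuinely an S5 model: reflexivity is forced by axiom 3, transitivity by axiom 4, and the Euclidean property by axiom 5, each via a short derivation carried out inside a maximal consistent set. With this in place completeness follows contrapositively: if $\varphi$ is not a theorem, then $\{\neg\varphi\}$ is consistent, extends to some maximal consistent $\Gamma$, and by the Truth Lemma ${\cal M}^c, \Gamma \not\models \varphi$, so $\varphi$ is not valid on the S5 class. I expect the main obstacle to be the Existence Lemma together with the verification of the frame conditions, since these are the steps where axioms 3--5 must be converted into genuine properties of the canonical accessibility relation; the remaining bookkeeping is standard.
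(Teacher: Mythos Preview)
Your proposal is correct and follows the standard canonical-model route for multi-modal $\mathbf{S5}$: soundness by induction on derivations checking each axiom against the corresponding frame property, and completeness via Lindenbaum's Lemma, the canonical model, the Truth/Existence Lemma, and verification that axioms 3--5 force each canonical $R_a$ to be an equivalence relation.

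The paper, however, does not supply its own proof of this theorem at all: it simply states the result and attributes all definitions and theorems of that section to \cite{hwb}, treating soundness and completeness of $\mathbf{S5_a}$ as background. So there is no ``paper's proof'' to compare against; your outline is exactly the textbook argument one would find in the cited reference, and nothing more is needed here.
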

    
    \begin{example}\label{el-card} This example is from \cite{hwb}.
    
      Suppose we have a card game with three cards: {\bf 0}, {\bf 1} and {\bf 2}, and three players {\bf a}, {\bf b} and {\bf c}. Each player receives a card and 
      do not know the other players cards. 
      
      We use proposition symbols $0_x, 1_x, 2_x$ for $x \in \{{\bf a}, {\bf b} , {\bf c}\}$ meaning ``player $x$ has 
      card {\bf 0}, {\bf 1} or {\bf 2}''. We name each state by the cards that each player has in that state, for instance $012$ is the state where player {\bf a} has
      card {\bf 0}, player {\bf b} has card {\bf 1} and player {\bf c} has card {\bf 2}\footnote{A state name underlined means current state}.
      The folowing epsitemic model repesents the epistemic state of each agent\footnote{We omitt the reflexive loops in the picture.}.     
      
      $Hexa1 = \langle (S, R), {\bf V} \rangle$:
      
      \begin{itemize}
        \item $S = \{ 012, 021, 102, 120, 201, 210 \}$
        \item $R = \{ (012,012), (012, 021), (021, 021), \dots \}$
        \item ${\bf V}(0_a) = \{012, 021\}$, ${\bf V}(1_a) = \{102, 120\}$, ...
      \end{itemize}     
    
      \begin{figure}[H]
        \centering
        \includegraphics[scale=0.25]{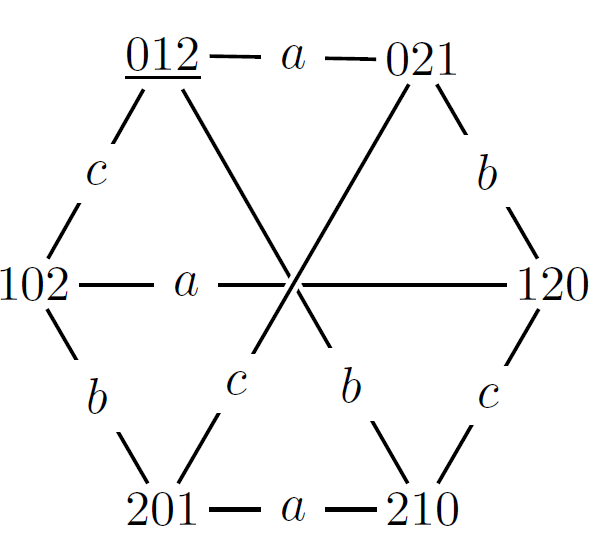}
        \caption{Epistemic Model $Hexa_1$}\label{hexa}
      \end{figure}
        
    \end{example}

\section{Action Models}\label{aml}

  All the definitions and theorems of this section are based on \cite{hwb}.
  
  \subsection{Language and Semantics}
    
    \begin{defn}\label{action-models}
      An action model $\amodel$ is a structure $\langle \Actions,\sim_a,\pre \rangle$, where:
      \begin{itemize}
      \item $\Actions$ is a \emph{finite} domain of action points or events;
      \item $\sim_{a}$ is an equivalence relation on $\Actions$, for each agent $a \in {\cal A}$;
      \item $\pre: \Actions \mapsto {\cal L}$ is a precondition function that assigns a
      precondition to each $\actiona\in\Actions$. 
      \end{itemize}
    
      Rooted action models is an action model with a distinguished  state $(\amodel,\actiona)$.
    
      Note that $\Actions$ is different from $S$, $\amodel$ is different from $\cal{M}$ and $\actiona$ is different from $s$.
    \end{defn}
    
    \begin{defn}\label{def-langactionmodel} 
      The Action Model language consists of a countable set $\Phi$ of proposition symbols,  a finite set ${\cal A}$ of agents, the boolean connectives $\neg$ and $\land$, a modality $K_a$ for each agent $a  \in {\cal A}$ and a modality $[\alpha]$ . The formulas are defined as follows:
      \[\varphi ::= p \mid \top \mid \neg \varphi \mid \varphi_1 \wedge \varphi_2  \mid K_a \varphi \mid [\alpha]\varphi,  \]
      \[ \alpha ::= (\amodel,\actiona) \mid \alpha_1;\alpha_2 \mid \alpha_1 \cup \alpha_2 \]
    
      where $p \in \Phi$,  $a \in {\cal A}$, $(\amodel,\actiona)$ a rooted action model and $\langle \alpha \rangle \eq \neg [\neg \alpha]$
    \end{defn}  
        
    \begin{defn}\label{product-action-models}
      Given an epistemic state $({\cal M},s)$ with ${\cal M} = \langle (S, R_a), {\bf V} \rangle$ and a rooted action model $(\amodel,\actiona)$ with $\amodel = \langle \Actions, \sim_a, \pre \rangle$. The result of executing $(\amodel,\actiona)$ in $({\cal M},s)$ is $({\cal M} \otimes \amodel,(s,\actiona))$ where ${\cal M} \otimes \amodel = \langle (S', R'_a), {\bf V}' \rangle$ such that:
      
      \begin{enumerate}
        \item $S' = \{(s,\actiona) ~such~that~ s \in S, \action \in \Actions, \text{ and } {\cal M}, s \models \pre(\actiona) \}$
        \item $(s,\actiona) R'_a (t,\actionb)  \text{ iff }  (s~R_a~t \text{ and } \actiona \sim_a \actionb)$
        \item $(s,\actiona) \in {\bf V}'(p) \text{ iff } s \in {\bf V}(p)$
      \end{enumerate}
    
    \end{defn}
    
    \begin{defn}\label{comp-action-models}{\bf Composition of rooted action models}
      
      Given rooted action models $(\amodel,\actiona)$ with $\amodel = \langle \Actions, \sim, \pre \rangle$ and $(\amodel',\actiona')$ with $\amodel' = \langle \Actions', \sim', \pre' \rangle$, their composition is the action model $(\amodel ; \amodel', (\actiona,\actiona'))$ with $\amodel ; \amodel' = \langle \Actions'', \sim'', \pre'' \rangle$:
    
      \begin{itemize}
        \item $\Actions'' = \{(\action,\actiona') ~such~that~ \action\in\Actions, \action' \in \Actions'$ \}
        \item $(\action,\action') \sim''_\agent (\actionb,\actionb')  \text{ iff }  (\action \sim_\agent \actionb \text{ and } \actiona' \sim'_\agent \actionb')$
        \item $\pre''(\action,\action') = \langle (\amodel,\action)\rangle \pre'(\action')$
      \end{itemize}
      
    \end{defn}

    \begin{defn}\label{satisfaction-am} Given a rooted epistemic state $({\cal M},s)$ with ${\cal M} = \langle (S, R_a), {\bf V} \rangle$ and a rooted action model $(\amodel,\actiona)$ with $\amodel = \langle     \Actions, \sim, \pre \rangle$. The notion of satisfaction ${\cal M},s \models \varphi$ extends from \ref{el-satisfaction} and is defined as follows
    
      \begin{description}
        \item[1,2,3, 4] as in definition \ref{el-satisfaction}
        \item[5.] ${\cal M},s \models [(\amodel,\actiona)]\phi \text{ iff }  {\cal M},s \models \pre(\actiona) \Rightarrow {\cal M}\otimes\amodel,(s,\actiona) \models \phi$
        \item[6.] $\llbracket \alpha \cup \beta \rrbracket   \text{ iff }  \llbracket \alpha \rrbracket \cup \llbracket \beta \rrbracket$
        \item[7.] $\llbracket(\amodel,\actiona);(\amodel',\actiona')\rrbracket  \text{ iff }  (\amodel ; \amodel', (\actiona,\actiona'))$ ~ Composition of action models
      \end{description}
      Where $\llbracket . \rrbracket$ is the interpretation on a action model.
    
    \end{defn}
    
  \subsection{Axiomatization}\label{ax-del}
    
    \begin{itemize}
        
      \bigskip     

      \centerline{\bf Epistemic Logic Axioms\/}
      \bigskip
    
       \item[]  {\it Axioms (i), (ii), (iii), (iv) and (v) of  section \ref{mael-ax}},
           
      \bigskip
    
      \centerline{\bf Action Model Logic Axioms\/}    
      \bigskip
       
      \item[(vi)] $[(\amodel,\actiona)]p \eq (\pre(\actiona) \imp p)$,    
      \item[(vii)] $[(\amodel,\actiona)] \neg \phi \eq (\pre(\actiona) \imp \neg [(\amodel,\actiona)] \phi)$  
      \item[(viii)] $[(\amodel,\actiona)] (\phi \et \psi) \eq ([(\amodel,\actiona)]\phi \et [(\amodel,\actiona)]\psi)$     
      \item[(ix)] $[(\amodel,\actiona)] \knows_\agent \phi \eq (\pre(\actiona) \imp \Et_{\actiona \sim_\agent \actionb} \knows_\agent [(\amodel,\actionb)] \phi)$
      \item[(x)] $[(\amodel,\actiona)] [(\amodel',\actiona')] \phi \eq [(\amodel,\actiona);(\amodel',\actiona')] \phi$    
      \item[(xi)] $[(\amodel,\actiona) \cup (\amodel',\actiona')]  \phi \eq [(\amodel,\actiona)] \phi \land  [(\amodel',\actiona')] \phi$ 
    \end{itemize}
    
    \centerline{\bf Inference Rules\/}    
      \smallskip

      M.P. $\varphi, \varphi\rightarrow \psi / \psi$ \qquad U.G.  $\varphi/ K_a \varphi$  \qquad $\varphi/ [\alpha] \varphi$ 
      \bigskip
    
      Every formula in the language of action model logic without common
      knowledge is equivalent to a formula in the language of epistemic logic \cite{hwb}.

    \begin{example} {\bf Continuation of example \ref{el-card}}\label{exampleActionModel}
    
      Suppose now agent {\bf a} wants to perform the action of showing her card to agent {\bf b}. In fact, we have three actions, agent {\bf a}  showing either card {\bf 0}, {\bf 1} or {\bf 2} to agent {\bf b}. Agents   {\bf a} and {\bf b} can distinguish between these three action but agent  {\bf c} cannot. This situation can be represented by the action model below.
    
    \begin{figure}[H]
    \centering
    \includegraphics[scale=0.25]{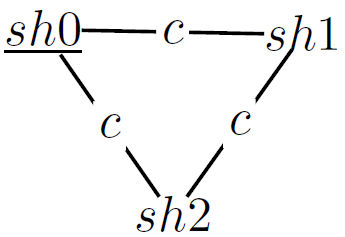} 
    \caption{Action Model for $show$}\label{hexaam}
    \end{figure}

    \begin{itemize}
      \item $\Actions=\{\mathsf{sh0,sh1,sh2}\}$
      \item $\sim_a \ =\{(\actiona,\actiona) \mid \actiona \in \Actions\}$
      \item $\sim_b \ =\{(\actiona,\actiona) \mid \actiona \in \Actions\}$
      \item $\sim_c \ =\Actions \times \Actions$
      \item $\pre(\mathsf{sh0})=0_a$
      \item $\pre(\mathsf{sh1})=1_a$
      \item $\pre(\mathsf{sh2})=2_a$
    \end{itemize}
    
      If agent {\bf a} performs the action of showing her card to agent {\bf b} on the epistemic model of example \ref{el-card}, we obtain:

    \begin{figure}[H]
      \centering
      \includegraphics[scale=0.25]{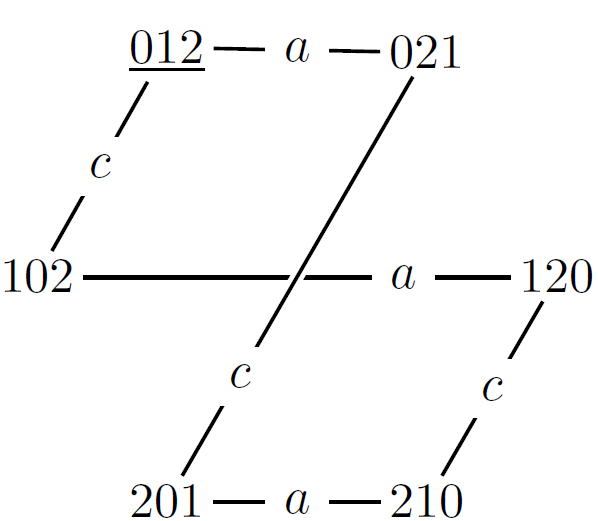}
      \caption{$Hexa_1$ After the Execution of $show$}\label{hexadepois1}
    \end{figure}
    
      This new epistemic model, shown in figure \ref{hexadepois1},  is obtained by the product of epistemic model of figure \ref{hexa} with the action model of figure \ref{hexaam}. It is important to notice that the number of states after the product is $18$ ($6\times3$), but most of them are thrown out because they do not satisfy the precondition.
     
    \end{example}

\section{Epistemic Actions and Concurrent Dynamic Epistemic Logic}\label{eal}

This section provides a brief introduction to the works presented in  \cite{hwbcdel} and \cite{hwb}.

Epistemic Actions is an extension of Multi-Agent Epismtemic Logic to deal with new information (updates), like Action Models, but it uses a different approach to deal with new information. Concurrent Dynamic Epistemic Logic proposes a way to deal with concurrency in Epistemic Actions.

    \subsection{Language and Semantics}
        \begin{defn}\label{def-langepistemicactions} 
            The Epistemic Actions language consists of a countable set $\Phi$ of proposition symbols,  a finite set ${\cal A}$ of agents, the boolean connectives $\neg$ and $\land$, a modality $K_a$ for each agent $a  \in {\cal A}$ and a modality $[\alpha]$ . The formulas and the actions are defined as follows:
            \[\varphi ::= p \mid \top \mid \neg \varphi \mid \varphi_1 \wedge \varphi_2  \mid K_a \varphi \mid [\alpha]\varphi,  \]
            \[ \alpha ::=  ?\alpha \mid L_{\cal B} \beta \mid (\alpha ! \alpha) \mid (\alpha \text{\textexclamdown} \alpha) \mid  (\alpha ; \beta) \mid (\alpha_1 \cup \alpha_2) \]

            where $p \in \Phi$,  $a \in {\cal A}$, ${\cal B} \subseteq {\cal A}$, $L$ stands for learning and $L_{\cal B} \beta$  means 'group ${\cal B}$ learn that $\beta$, $?\alpha$ is a test, $(\alpha ! \alpha)$ is called left local choice, $(\alpha \text{\textexclamdown} \alpha)$ is called right local choice, $(\alpha ; \beta)$ is sequential composition (first $\alpha$ then $\beta$), $(\alpha_1 \cup \alpha_2)$ is non-deterministic choice. 
        \end{defn}

      \begin{defn}\label{ea-semantic} Given the epistemic model  ${\cal M} = \langle S, \sim_a, V \rangle$ and the state $s \in S$. The notion of satisfaction ${\cal M},s \models \varphi$ extends from \ref{el-satisfaction} and is defined as follows
    
      \begin{description}
        \item[1,2,3, 4] as in definition \ref{el-satisfaction}
        \item[5.] ${\cal M},s \models [\alpha]\phi \text{~~~iff~~~} \text{for all } ({\cal M'},s'): ({\cal M},s) [\alpha]({\cal M'},s')  \text{ implies } ({\cal M'},s') \models \phi $
            \item[6.] $({\cal M},s) [?\phi]({\cal M'},s') \text{~~~iff~~~} {\cal M'} = \langle [\phi]_M , \emptyset , V \cap [\phi]_M \rangle \text{ and } s' = s $
            \item[7.] $({\cal M},s) [L_G \phi]({\cal M'},s') \text{~~~iff~~~} {\cal M'} = \langle S' , \sim' , V' \rangle \text{ and } ({\cal M},s)[\phi] s'$
            \item[8.] $\llbracket \alpha ; \alpha' \rrbracket \text{~~~ = ~~~} \llbracket \alpha \rrbracket \circ \llbracket \alpha' \rrbracket$
            \item[9.] $\llbracket \alpha \cup \alpha' \rrbracket \text{~ = ~~~} \llbracket \alpha \rrbracket \cup \llbracket \alpha' \rrbracket$
            \item[10.] $\llbracket \alpha ! \alpha' \rrbracket \text{~~~ = ~~~} \llbracket \alpha \rrbracket$
        \end{description}
      \end{defn}

            The Concurrent Dynamic Epistemic Logic language adds the concurrent execution operator to the actions of  Epistemic Actions language. The actions are defined as follows:
            
            \[ \alpha ::=  ?\alpha \mid L_{\cal B} \beta \mid (\alpha ! \alpha) \mid (\alpha \text{\textexclamdown}  \alpha) \mid  (\alpha ; \beta) \mid (\alpha_1 \cup \alpha_2) \mid (\alpha_1 \cap \alpha_2) \]

            where $(\alpha_1 \cap \alpha_2)$ represents a concurrent execution.

    \begin{example}\label{ea-cdel-example} In order to illustrate the use of the language of  Epistemic Actions, we consider the  card game presented in section \ref{aml}.
        
        The Epistemic Model is the same shown in figure \ref{hexa}. 
        
        The  action  of ``agent {\bf a} showing her card to agent {\bf b}" can be model as:
        \[ \begin{array}{lcl}
            (L_{(b)}?0_a \cup L_{(b)}?1_a \cup L_{(b)}?2_a);(L_{(a,b,c)}?(K_{b}0_a \vee K_{b}1_a \vee K_{b}2_a))
        \end{array} \]    
        This means that agent {\bf a} tells agent {\bf b} her card and after that all agents know that agent {\bf b} knows the card that agent {\bf a} holds.
      After performing this action, the resulting  epistemic model is the same as in  figure \ref{hexadepois1}.
        
    \end{example}

\section{Dynamic Epistemic Logic with Communication Actions}
 
  \subsection{Process Calculus} \label{pc}
    
    In this section, we propose a very small process (program) calculus for the  programs of Dynamic Epistemic Logic with Communication Actions (DELWCA). It is inspired by \cite{glabbleek}.
    
    Let ${\cal A} = \{1, ..., n \}$, denoted by $i, j ...$, be a finite set of agents,  AMS=$ \{ a_1,a_2,a_3\ldots\}$ be a finite set of action models and ${\cal N} = \{c_1,c_2,c_3,\ldots, \overline{c_1}, \overline{c_2}, \overline{c_3}, \ldots\}$ be a finite set of communication actions. As a convention, communication actions with one overline represent output and with no overlines represent an input. Communication actions can be combined to form a private action model, by joining an output communication action with its respective input  ( [$ c_1, \overline{c_1} $] = $a_1$ ). The action model resultant of the join of two communication actions is known as silent action, denoted by $\tau^{s}_{i,j}(.)$,  that can be interpreted as the result of a communication between agents $i$ and $j$\footnote{As silent actions $\tau^{s}_{i,j}(.)$ are interpreted as private action models, the index $s$ denotes the root of the action model $\tau^{s}_{i,j}(.)$.}.
    
    \begin{defn}\label{program-syntax} The language can be defined as follows.   
   
      $$\eta ::=\alpha \mid \alpha.\eta \mid \eta_1;\eta_2 \mid \eta_1 + \eta_2 , \text{ where } \alpha \in \text{AMS} \cup {\cal N}$$
      
      $$\pi ::= \eta \mid \beta. \pi \mid \pi_1;\pi_2 \mid \pi_1 + \pi_2  \mid \eta_1 \parallel \eta_2 \cdots \parallel \eta_n$$ where $n = |{\cal A}|$ and $\eta_i$ denotes the program performed by agent $i$.
      
      We use $\pi$ and $\eta$ to denote processes (programs) and $\alpha$ and $\beta$ to denote action models and communication actions.

      The \emph{prefix} operator $.$ denotes that the process will first perform the action $\alpha$ and then behave as $\pi$. The \emph{summation} (or \emph{nondeterministic choice}) operator $+$ denotes that the process will make a nondeterministic choice to behave as either $\pi_1$ or $\pi_2$. The \emph{parallel composition} operator $\parallel$ denotes that the processes $\eta_1, ..., \eta_n$, performed by agents $1,...,n$ respectively, may proceed independently or may communicate through a common channel. 
    \end{defn}

    We write $\pi \stackrel{\alpha}{\rightarrow} \pi'$ to express that the process $\pi$ can perform the action $\alpha$ and after that behave as $\pi'$. We write $\pi \stackrel{\alpha}{\rightarrow} \surd$ to express that the process $\pi$ successfully finishes after performing the action $\alpha$. A process  finishes when there is no possible action left for it to perform. For example, $\beta \stackrel{\beta}{\rightarrow} \surd$. When a process finishes inside a parallel composition, sequential composition or non-deterministic choice we write $\pi$ instead of $\pi| \surd$, $\pi ; \surd$ and $\pi + \surd$. We also write $\surd$ instead of  $\surd | \surd$.
    
    Like \cite{milner89} we need to restrict the agents to perform some actions. In our case we don't want to perform communication actions, but we can perform $\tau$ action which results from the combination of communication actions ($ \overline{a}, \overline{\overline{a}} $).
    
    The semantics of our process calculus can be given by the transition rules presented in table \ref{tab:semccs}, where $\pi$ and $\eta$  are process specifications, while $\pi'$ and $\eta'$ are process specifications or $\surd$. The $\tau^{s}_{i,j}(.)$ action  represents an  internal communication action from agent $i$ to agent $j$.
    
    \Large
    \begin{table}[H]
      \centering
      \begin{tabular}{|c|}
        \hline
        $\alpha \stackrel{\alpha}{\rightarrow} \surd$
        \\
        \hline
        $\alpha.\pi \stackrel{\alpha}{\rightarrow} \pi$ 
        \\
        \hline
        $\displaystyle{ \frac{\pi_1 \stackrel{\alpha}{\rightarrow} \pi_1'} {\pi_1 ; \pi_2 \stackrel{\alpha}{\rightarrow} \pi_1' ; \pi_2}}$
        \\
        \hline  
        $\displaystyle \frac{\pi_1 \stackrel{\alpha}{\rightarrow} \pi_1'} {\pi_1 + \pi_2 \stackrel{\alpha}{\rightarrow} \pi_1'}$
        \\
        \hline
        $\displaystyle \frac{\pi_2 \stackrel{\beta}{\rightarrow}\pi_2'} {\pi_1 + \pi_2 \stackrel{\beta}{\rightarrow}\pi_2'}$
        \\
        \hline
        $\displaystyle{ \frac{\eta_i \stackrel{\alpha}{\rightarrow} \eta_i'}{(\eta_1 \parallel ... \parallel \eta_i \parallel ... \parallel \eta_n) 
        \stackrel{\alpha}{\rightarrow} (\eta_1 \parallel ... \parallel \eta_i' \parallel ... \parallel \eta_n)}}, ~for~all~i,j \in {\cal A}$ 
        \\
        \hline
        $\displaystyle{ \frac{\eta_i \stackrel{\overline{c}}{\rightarrow} \eta_i', \eta_j \stackrel{c}{\rightarrow} \eta_j'}
        {(\eta_1 \parallel ... \parallel \eta_i \parallel ... \parallel \eta_j \parallel ... \parallel \eta_n) \stackrel{\tau_{i,j}(.)}{\rightarrow} 
        (\eta_1 \parallel ... \parallel \eta_i' \parallel ... \parallel \eta_j' \parallel ... \parallel \eta_n)}}, ~for~all~i,j \in {\cal A}$ 
        \\
        \hline
      \end{tabular}
      \caption{Transition Relation}
      \label{tab:semccs}
    \end{table}
    
    \normalsize
    
    \vspace{0.3cm}
       
    \begin{example}\label{el-card-3} Continuation of the card game example.
     
      Now suppose that the game is online and player {\bf a} sends a message $p$ to players {\bf b} and {\bf c}. So after the message $p$ players {\bf b} and {\bf c} know all the cards.
      This problem can be modeled as follows:
    
      \begin{itemize}
        \item $ \pi_1 = \overline{c_{ab}}(p);\overline{c_{ac}}(p) + \overline{c_{ac}}(p);\overline{c_{ab}}(p) $
        \item $ \pi_2 = c_{ab}(.).\beta $
        \item $ \pi_3 = c_{ac}(.).\gamma $
        \item $ \pi_{1\parallel2\parallel3} = (\pi_1 \parallel  \pi_2 \parallel \pi_3) $
      \end{itemize}
    
      Given the programs $\pi_1, \pi_2$ and $\pi_3$, initially we have two possible actions: communication between {\bf a} and {\bf b} or communication  between {\bf a} and {\bf c}. Suppose that the communication between {\bf a} and {\bf b} occurs first, then we will have two possible actions: communication between  {\bf a} and {\bf c} or action $\beta$ and so on ...

      We can represent this using parallel composition:
    
    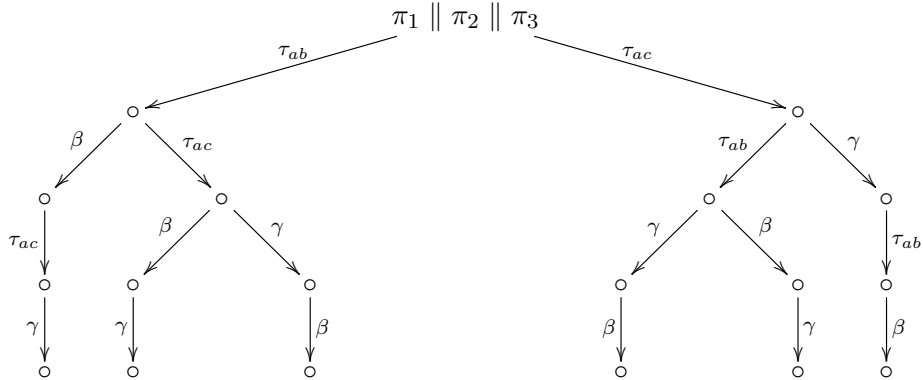
\begin{figure}[H]
      \centering
      
      $\xymatrix{
      &		&   &   &  	&\pi_1 \parallel \pi_2 \parallel \pi_3 \ar[llld]_{\tau_{ab}} \ar[rrrd]^{\tau_{ac}}	&			&			&   &    \\
      &   & \circ \ar[ld]_{\beta}	\ar[rd]^{\tau_{ac}}&    &		&	   	&   &   & \circ	\ar[ld]_{\tau_{ab}} \ar[rd]^{\gamma}  &     \\
      &  \circ \ar[d]_{\tau_{ac}} &  	&  \circ \ar[ld]_{\beta} \ar[rd]^{\gamma} &		&	   	&   & \circ \ar[ld]_{\gamma} \ar[rd]^{\beta}  &  &  \circ	\ar[d]^{\tau_{ab}}   \\
      &  \circ \ar[d]_{\gamma} & \circ \ar[d]_{\gamma}	&    &	\circ \ar[d]^{\beta}	&	   	&  \circ \ar[d]_{\beta} &   &  \circ \ar[d]^{\gamma} &  \circ	\ar[d]^{\beta}   \\
      &  \circ  & \circ	&    &	\circ	&	   	& \circ  &   & \circ &    \circ  }$
      
      \caption{Possible Runs of Process $\pi_1 \parallel \pi_2 \parallel \pi_3$}
      \end{figure}

      So :
    
      \begin{itemize}
      \item $ \langle \pi_{1 \parallel 2 \parallel 3} \rangle . ( K_2 p  \land  K_3 p) $ is true
      \item $ \langle \pi_{1 \parallel 2 \parallel 3} \rangle . ( K_2 p  \lor  K_3 p) $ is true
      \item $ \langle \pi_{1 \parallel 2 \parallel 3} \rangle .  \neg ( K_2 p  \lor  K_3 p) $ is false
      \end{itemize}
    
    \end{example}

  \subsection{Bisimulation}
    
    The concept of bisimulation is a key notion in any process algebra. It is an equivalence relation between processes which have mutually similar behavior. The intuition is that two bisimilar processes cannot be distinguished by an external observer. Using the notion of bisimulation allows us to transform any process in an equivalent one that is  a summation of all their possible actions, that is what the Expansion Law (theorem \ref{exp_law}) states.
    
    There are two possible semantics for the $\tau$ action in CCS: it can be regarded as being observable, in the same way as the communication actions, or it can be regarded as being invisible. We adopt the first one, since it is more generic and fits better in our formalism. Whenever the $\tau$ action is observable the bisimulation relation is called {\it strong}.
    
    \begin{defn}[\cite{milner89}]\label{bisimula}
      Let $\Pi$ be the set of all  processes. A set $Z \subseteq \Pi \times \Pi$ is a \emph{strong bisimulation} if $(\pi_1,\pi_2) \in Z$ implies the following for all $\alpha \in $ AMS :
      \begin{itemize}
        \item If $\pi_1 \stackrel{\alpha}{\rightarrow} \pi_1'$, then there is $\pi_2' \in \Pi$ such that $\pi_2 \stackrel{\alpha}{\rightarrow} \pi_2'$ and $(\pi_1',\pi_2') \in Z$;
        \item If $\pi_2 \stackrel{\alpha}{\rightarrow} \pi_2'$, then there is $\pi_1' \in \Pi$ such that $\pi_1 \stackrel{\alpha}{\rightarrow} \pi_1'$ and $(\pi_1',\pi_2') \in Z$;
        \item $\pi_1 \stackrel{\alpha}{\rightarrow} \surd$ if and only if $\pi_2 \stackrel{\alpha}{\rightarrow} \surd$.
      \end{itemize}
    \end{defn}
        
    \begin{defn}[\cite{milner89}]\label{def:bisim}
      Two process $\pi$ and $\pi'$ are \emph{strongly bisimilar} (or simply \emph{bisimilar}), denoted by $\pi \simeq \pi'$, if there is a strong bisimulation $Z$ such that $(\pi,\pi') \in Z$.
    \end{defn}
    
    Now, we introduce the Expansion Law, which is very important in the definition of the semantic and in the axiomatization of our logic. We present a particular case of the Expansion Law, which is suited to our needs. The most general case of the Expansion Law is presented in \cite{milner89}.
    
    \begin{thm}[\cite{milner89}][Expansion Law (EL)]\label{teo:EL} \label{exp_law}
      Let $\pi =(\eta_1 \parallel ...  \parallel \eta_n)$. Then
      
      \[
      \pi \sim \sum_{\eta_i \stackrel{\alpha}{\rightarrow} \eta_i'} 
      \alpha.(\eta_1 \parallel ... \parallel \eta_i' \parallel  ... \parallel \eta_n) +  \sum_{(\eta_i \stackrel{c}{\rightarrow} \eta_i') \& (\eta_j \stackrel{\overline{c}}{\rightarrow} \eta_j')} \tau_{i,j}(.) . (\eta_1 \parallel ... \parallel \eta_i' \parallel ... \parallel \eta_j' \parallel ... \parallel \eta_n) \]
        
      were $\alpha$ is a action model and $\tau_{i,j}$ is a private action model resulted by the combination of two communication actions.
        
      We denote the right side of this bisimilarity by $Exp(\pi)$. We also denote by $\bf {0}$ the processes whose expansion is empty, i.e., there is no $(\eta_i \stackrel{c}{\rightarrow} \eta_i')$ , $(\eta_j \stackrel{\overline{c}}{\rightarrow} \eta_j')$ and $(\eta_k \stackrel{\alpha}{\rightarrow} \eta_k')$ for any $i,j,k \in \{1,...,n\}$.
    \end{thm}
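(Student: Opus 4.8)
The plan is to establish the bisimilarity $\pi \sim Exp(\pi)$ directly from Definition \ref{def:bisim}, by exhibiting a concrete strong bisimulation $Z$ containing the pair $(\pi, Exp(\pi))$. The guiding observation is that one application of the Expansion Law unfolds $\pi$ by exactly one step, and that the process reached after that step on the right-hand side is \emph{syntactically identical} to the one reached on the left-hand side. Consequently the candidate relation can be taken as small as possible, namely
\[ Z = \{(\pi, Exp(\pi))\} \cup \{(\rho,\rho) : \rho \in \Pi\}, \]
the pair of interest together with the identity relation on all processes. Since the identity is trivially a strong bisimulation, everything reduces to verifying the three transfer conditions of Definition \ref{bisimula} for the single nontrivial pair $(\pi, Exp(\pi))$.

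First I would catalogue the transitions of $\pi = (\eta_1 \parallel \cdots \parallel \eta_n)$. Because $\pi$ is a top-level parallel composition, the only rules of Table \ref{tab:semccs} that can fire are the two parallel-composition rules. The interleaving rule produces, for every $\eta_i \stackrel{\alpha}{\rightarrow} \eta_i'$, a transition $\pi \stackrel{\alpha}{\rightarrow} (\eta_1 \parallel \cdots \parallel \eta_i' \parallel \cdots \parallel \eta_n)$; the communication rule produces, for every pair of matching transitions $\eta_i \stackrel{\overline{c}}{\rightarrow} \eta_i'$ and $\eta_j \stackrel{c}{\rightarrow} \eta_j'$, a transition $\pi \stackrel{\tau_{i,j}(.)}{\rightarrow} (\eta_1 \parallel \cdots \parallel \eta_i' \parallel \cdots \parallel \eta_j' \parallel \cdots \parallel \eta_n)$. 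Each of these is finite in number, since ${\cal A}$ is finite and each $\eta_i$ has finitely many immediate transitions, so $Exp(\pi)$ is a well-formed finite summation.

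Next I would compute the transitions of $Exp(\pi)$. Each summand is a prefixed process, either $\alpha.(\cdots)$ or $\tau_{i,j}(.).(\cdots)$, and by the prefix rule $\gamma.\rho \stackrel{\gamma}{\rightarrow} \rho$ it has exactly one outgoing transition, namely to its own body; by the two summation rules these are precisely the transitions of $Exp(\pi)$. Comparing with the previous paragraph, the transitions of $Exp(\pi)$ and of $\pi$ are in bijection, and matching transitions carry the same label (an action model or a $\tau_{i,j}$, both observable in the strong setting we adopted) and lead to the very same residual process. This gives both transfer clauses at once: if $\pi \stackrel{\gamma}{\rightarrow} P'$ then the corresponding summand yields $Exp(\pi) \stackrel{\gamma}{\rightarrow} P'$ with $(P',P') \in Z$, and symmetrically. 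The termination clause follows the same correspondence, using the convention that a component reaching $\surd$ is erased from the parallel composition, so $\pi \stackrel{\gamma}{\rightarrow} \surd$ holds iff the matching summand reduces to $\surd$.

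I expect the only delicate points to be bookkeeping rather than conceptual. The first is keeping the indexing of the synchronization summands consistent with the communication rule of Table \ref{tab:semccs}, where the pairing of $c$ with $\overline{c}$ fixes whether the produced label is $\tau_{i,j}$ or $\tau_{j,i}$; this is a matter of fixing one convention and applying it uniformly. The second is the careful treatment of the $\surd$ cases inside the parallel operator, where one must appeal to the conventions stated after Definition \ref{program-syntax} to see that the residuals on the two sides genuinely coincide. Neither affects the structure of the argument; the essential content is the one-step transition matching above, which is exactly why the identity relation already suffices to close the bisimulation. Alternatively, as this is the standard CCS Expansion Law specialised to our prefix/parallel/sum fragment, one may simply invoke Milner's proof in \cite{milner89}.
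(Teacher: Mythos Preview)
Your proposal is correct and follows exactly the approach the paper indicates: the paper's proof simply states that the result follows from Table~\ref{tab:semccs} and Definitions~\ref{bisimula} and~\ref{def:bisim}, deferring the detailed argument to Milner~\cite{milner89}, while you spell out precisely that argument by exhibiting the bisimulation $Z = \{(\pi,Exp(\pi))\}\cup\mathrm{Id}$ and checking the transfer conditions via the transition rules. Your final sentence even notes the option of invoking Milner directly, which is what the paper actually does.
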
 
    
    \begin{proof}
      This follows from table \ref{tab:semccs} and definitions \ref{bisimula} and \ref{def:bisim}. A detailed proof for the most general case of this theorem can be found in \cite{milner89}.
    \end{proof}
    
    The Expansion Law is a very useful property of CCS processes. Its intuition is that processes can be rewritten as a summation 
    of all their possible actions. Suppose we have a processes $A \stackrel{def}{=} c.A' + \alpha.A''$ and $B \stackrel{def}{=} \overline{c}.B' + \beta.B''$, 
    then the process
    $(A \parallel B)$ is equivalent, using the Expansion Law, to $$(A \parallel B) \simeq \alpha.(A'' \parallel B) + \beta.(A \parallel B'') + \tau_{AB}.(A' \parallel B')$$

  \subsection{Language}
    
    In this section we present the DELWCA language.
    
    \begin{defn}\label{def-langcdel} 
      The DELWCA language consists of a set $\Phi$ of countably many proposition symbols, a set $\Pi$ of  programs as defined in \ref{program-syntax}, a finite set ${\cal A}$ of agents,  the boolean connectives $\neg$ and $\land$, a modality $\langle \pi \rangle$ for every program $\pi \in \Pi$ (as defined in section \ref{pc}) and a modality $K_a$ for each agent $a$. The formulas are defined as follows:
      \[
        \varphi ::= p \mid \top \mid \neg \varphi \mid \varphi_1 \wedge \varphi_2 \mid \langle \pi \rangle \varphi \mid K_i \varphi 
      \]
      where $p \in \Phi$, $\pi \in \Pi$, $i \in {\cal A}$ and $\langle \pi \rangle \varphi$ means that exists a execution of $\pi$ that leads to a state where $\varphi$ is true.
    \end{defn}

    \subsection{Semantics}\label{sem-cdel}    
      For communication actions (actions in ${\cal N}$) we need to relax the fact that relations in action models are equivalence relations, we just need them to be relations. For this case all the definitions of action models (def. \ref{action-models}), execution (product) of action models (def. \ref{product-action-models}), composition of action models (def. \ref{comp-action-models}) can be easily adapted.
      
    \begin{defn}\label{tau} 
      Let ${\cal A}$ be the set of all agents and $i,j \in {\cal A}$. The  action model $\tau^{s}_{i,j}(\varphi) = (\amodel,\actiona)$, with $\amodel = \langle \Actions, \sim, \pre \rangle$, is defined as follows:    
      \begin{itemize}
        \item $\Actions=\{\mathsf{\actiona, \actionb}\}$
        \item $\sim_i \ =\{(\actiona,\actiona), (\actionb,\actionb)\}$
        \item $\sim_j \ =\{(\actiona,\actiona), (\actionb,\actionb) \}$
        \item $\sim_k \ =\{(\actiona, \actionb), (\actionb,\actionb) \}$, for all $k \in {\cal A}\backslash  \{i,j\}$
        \item $\pre(\mathsf{\actiona})= \varphi$
        \item $\pre(\mathsf{t})= \top$
      \end{itemize}    
    \end{defn}

    \begin{figure}[H]
      \begin{center}
      \includegraphics[scale=0.25]{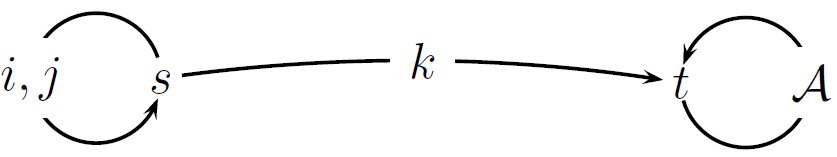}   
      \end{center}    
    \caption{Action Model for $\tau^{s}_{i,j}$}
    \end{figure}
    
    In order to obtain the definition of satisfaction for DELWCA  we must add the following condition to  definition \ref{satisfaction-am}:
    
    \vspace{0.3cm}
    
    \noindent $\llbracket (\eta_1 \parallel ...  \parallel \eta_n) \rrbracket =   $
    $ ~  \{ ~\llbracket\tau_{i,j}(.) \rrbracket ;
    \llbracket(\eta_1 \parallel ... \parallel \eta_i' \parallel ... \parallel \eta_j' \parallel ... \parallel \eta_n)\rrbracket$, 
    for all
    $(\eta_i \stackrel{\overline{c_1}}{\rightarrow} \eta_i') ~~\&$\\$ (\eta_j \stackrel{c_1}{\rightarrow} \eta_j')~\}~  \bigcup ~ \{~ \llbracket \alpha \rrbracket ; 
  \llbracket(\eta_1 \parallel ... \parallel \eta_i' \parallel  ... \parallel \eta_n)\rrbracket$,  
    for all
    $(\eta_i \stackrel{\alpha}{\rightarrow} \eta_i')~\} $
        
    \subsubsection{Axiomatization}\label{ax-cdel}
        
    \begin{enumerate}
    
      \item  {\it All instantiations of propositional tautologies},
    
      \bigskip
        
      \centerline{\bf Epistemic Logic Axioms\/}
    
      \bigskip
    
      \item[]  {\it Axioms (i), (ii), (iii), (iv) and (v) of  section \ref{mael-ax}},
    
      \bigskip
    
      \centerline{\bf Action Model  Axioms\/}
    
      \bigskip
    
      \item[]  {\it Axioms (vi), (vii), (viii) and (ix) of  section \ref{ax-del}},
  
      \bigskip
    
      \centerline{\bf PDL  Axioms\/}
    
      \bigskip
    
      \item[(x)] $[\pi](\phi \rightarrow \psi) \rightarrow ([\pi] \phi \rightarrow [\pi] \psi)$ (K axiom)
        
      \item[(xi)] $[\pi_1] [\pi_2] \phi \eq [\pi_1;\pi_2] \phi$ (Composition)
      
      \item[(xii)] $[\pi_1 + \pi_2]  \phi \eq [\pi_1] \phi \land  [\pi_2] \phi$ (Non-deterministic Choice)
      
      \item[(xiii)] $[\alpha.\pi] \phi \eq [\alpha][\pi]\phi$ (Prefix)\footnote{It is important to notice that Prefix is a special case of Composition}
      
      \item[(xiv)] $[\alpha.\pi] \phi \eq pre(\alpha) \rightarrow [\pi]\phi$ 
    
      \bigskip
    
      \centerline{\bf Concurrent Action  Axiom\/}
    
      \bigskip
    
      \item[(xv)] $[\eta_1 \parallel ...  \parallel \eta_n] \phi \eq [Exp(\eta_1 \parallel ...  \parallel \eta_n)] \phi$ \label{concurrentActionAxiom}     
       
    \end{enumerate}
    
    \centerline{\bf Inference Rules\/}
    
    \smallskip    
    M.P. $\varphi, \varphi\rightarrow \psi / \psi$
    \qquad U.G. $\varphi/ [ \pi ]\varphi$ \qquad $\varphi/ K_a \varphi$  
 
  \bigskip  
  \begin{proposition}
      $\vdash [\alpha;\pi_2] \phi \eq [\alpha][\pi_2] \phi\eq [\alpha.\pi_2] \phi \eq pre(\alpha) \rightarrow [\pi_2]\phi$
  \end{proposition}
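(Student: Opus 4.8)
The plan is to read the displayed four-way equivalence as a chain of three pairwise biconditionals,
\[ [\alpha;\pi_2]\phi \ \eq\ [\alpha][\pi_2]\phi, \qquad [\alpha][\pi_2]\phi \ \eq\ [\alpha.\pi_2]\phi, \qquad [\alpha.\pi_2]\phi \ \eq\ pre(\alpha)\imp[\pi_2]\phi, \]
to derive each link as an instance of one of the dynamic axioms of section \ref{ax-cdel}, and then to close the chain using the fact that $\eq$ is provably transitive.

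First I would observe that $\alpha$, being an action model, is itself a well-formed program by the base case $\eta ::= \alpha$ of Definition \ref{program-syntax}; hence $\alpha;\pi_2$ is a legitimate sequential composition. The first link is then exactly the Composition axiom (xi) instantiated with $\pi_1 := \alpha$, giving $[\alpha;\pi_2]\phi \eq [\alpha][\pi_2]\phi$. The second link is the Prefix axiom (xiii), $[\alpha.\pi_2]\phi \eq [\alpha][\pi_2]\phi$, read as a biconditional between $[\alpha][\pi_2]\phi$ and $[\alpha.\pi_2]\phi$ by symmetry of $\eq$ (a propositional tautology). The third link is precisely axiom (xiv), $[\alpha.\pi_2]\phi \eq pre(\alpha)\imp[\pi_2]\phi$.

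Finally I would stitch the three links together. Since each $A \eq B$ is a theorem and the schema $((A\eq B)\imp((B\eq C)\imp(A\eq C)))$ is a propositional tautology, hence an axiom by item 1 of section \ref{ax-cdel}, repeated modus ponens yields the equivalence of all four formulas, which is the claim. The only point requiring a moment's care, and the closest thing to an obstacle, is the first step: one must notice that the Composition axiom applies even though $\alpha$ is an atomic action rather than a compound program, which is exactly the content of the footnote to axiom (xiii) that prefix is a special case of composition. Everything else is routine propositional reasoning over the three axiom instances.
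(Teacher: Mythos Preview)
Your proof is correct and matches the paper's intent: the proposition is stated in the paper without proof, immediately after the axiomatization, and is evidently meant to be a direct consequence of axioms (xi), (xiii), and (xiv) chained by propositional reasoning, exactly as you do. Your remark that $\alpha$ qualifies as a program via the base clause of Definition~\ref{program-syntax}, so that Composition applies, is precisely the point flagged by the footnote to axiom (xiii).
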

  \bigskip

  \begin{example}
    
    A supervisor Ane (1) and her two students Bob(2) and Cathy(3) are working in their computer located at their own house. The supervisor wants to book a meeting "tomorrow at 16:00". She sends a message asynchronously to Bob and Cathy. We are supposing that the supervisor uses channels $c_{12}$ and $c_{13}$ to communicate with Bob and Cathy respectively. We represent Anne, Bob and Cathy by processes $\pi_1$, $\pi_2$ and $\pi_3$ respectively, and their parallel composition by $\pi_{1\parallel 2 \parallel 3}$.  
    
    \begin{itemize}    
      \item $\pi_1 = \overline{c}_{12}(p);\overline{c}_{13}(p) + \overline{c}_{13}(p);\overline{c}_{12}(p)$      
      \item $\pi_2 = c_{12}(.)$      
      \item $\pi_3 = c_{13}(.)$      
      \item $\pi_{1\parallel 2 \parallel 3} = (\pi_1 \parallel \pi_2 \parallel \pi_3)$    
    \end{itemize}    
 
    We have two possible runs process $\pi_{1\parallel 2 \parallel 3}$ as shown in the tree in figure \ref{exemplo2runs}.
    
    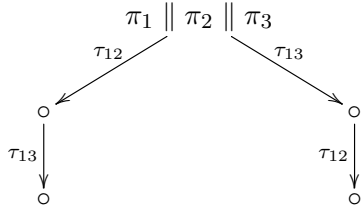
\begin{figure}[H]
    \centering    
      $\xymatrix{
      &		 	&\pi_1 \parallel \pi_2 \parallel \pi_3 \ar[ld]_{\tau_{12}} \ar[rd]^{\tau_{13}}	&						\\
      &\circ \ar[d]_{\tau_{13}}	& 				&\circ	\ar[d]_{\tau_{12}}&			\\
      &\circ	 		&				&	\circ				}$    
      \caption{Possible Runs of Process $\pi_1 \parallel \pi_2 \parallel \pi_3$}
      \label{exemplo2runs}
    \end{figure}
    
    Let propositional symbol $p$ represent "tomorrow at 16:00". The epistemic model $\cal{M}_0$ at the begging is as shown in figure \ref{initialEpistemicModel}.
     
    \begin{figure}[H]
      \centering
    
      $\xymatrix{
      &u ~ \circ \ar@{-}[r]	& 	2,3		\ar@{-}[r]	&\circ ~ v &			\\
      &p 		&				&	\neg p				}$
      \caption{Initial Epistemic Model ${\cal M}_0$}
      \label{initialEpistemicModel}
    \end{figure}
    
    The action models for $\tau_{12}$ and $\tau_{13}$ are presented in figures \ref{actionModelTau12} and \ref{actionModelTau13}.
    \begin{figure}[H]
      \begin{center}
        \includegraphics[scale=0.25]{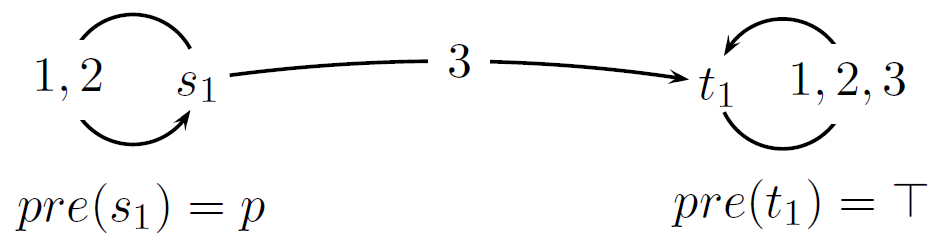}  
      \end{center}
      \caption{Action Model for $\tau_{12}$}
      \label{actionModelTau12}
    \end{figure}
    
    \begin{figure}[H]
      \begin{center}
        \includegraphics[scale=0.25]{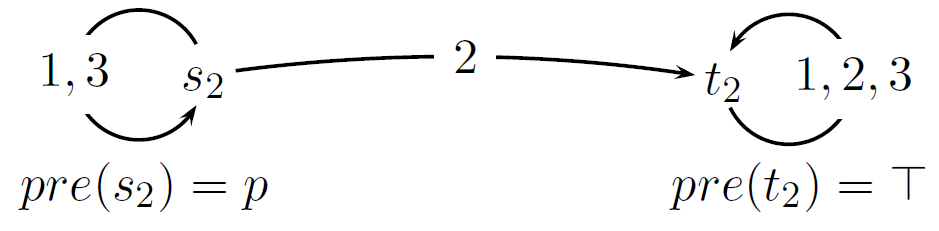}
      \end{center}
    \caption{Action Model for $\tau_{13}$}
    \label{actionModelTau13}
    \end{figure}

    Suppose $\tau_{12}$ is performed before $\tau_{13}$. After the execution of $\tau_{12}$ we obtain the epistemic model picture in figure \ref{epistmeicModelM0Tau12}.  
    
    \begin{figure}[H]
      \begin{center}
        \includegraphics[scale=0.25]{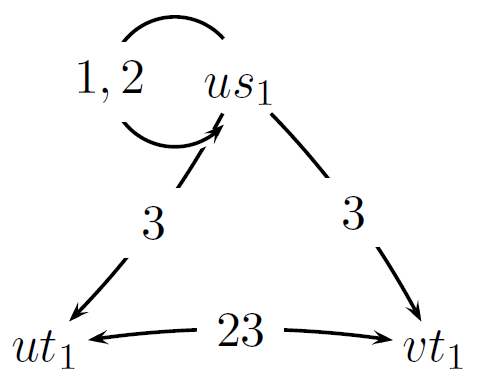}  
      \end{center}
    \caption{Epistemic Model ${\cal M}_{1} = {\cal M}_{0} \otimes \tau_{12}$}
    \label{epistmeicModelM0Tau12}
    \end{figure}

    It is important to notice that at state $us_1$ Ane and Bob knows $p$ ${\cal M}_1 , us_1 \vdash K_1 p \land K_2 p$ but Cath doesn't  ${\cal M}_1 , us_1 \vdash \neg K_3 p$. After the second communication $\tau_{13}$ we have the epistemic model of figure \ref{epistemicModelM1Tau13}.

    \begin{figure}[H]
      \begin{center}
        \includegraphics[scale=0.25]{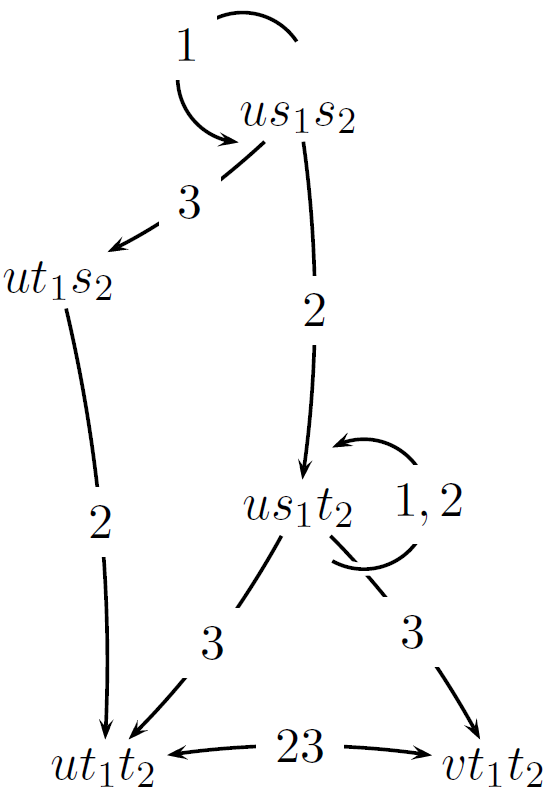}
      \end{center}
      \caption{Epistemic Model ${\cal M}_{2} = {\cal M}_{1} \otimes \tau_{13} =  {\cal M}_{0} \otimes \tau_{12} \otimes \tau_{13}$}
      \label{epistemicModelM1Tau13}
    \end{figure}
    
    We can notice, from figure \ref{epistemicModelM1Tau13} that at state $us_1s_2$ Ane, Bob and Cath knows $p$ ${\cal M}_2 , us_1s_2 \vdash K_1 p \land K_2 p \land K_3 p$ as expected. If we execute run $\tau_{13};\tau{12}$ we obtain the model ${\cal M}_{3}$ as shown in figure \ref{epismitecModelM3}.
    
    \vspace{1cm}
    
    \begin{figure}[H]
      \begin{center}
        \includegraphics[scale=0.25]{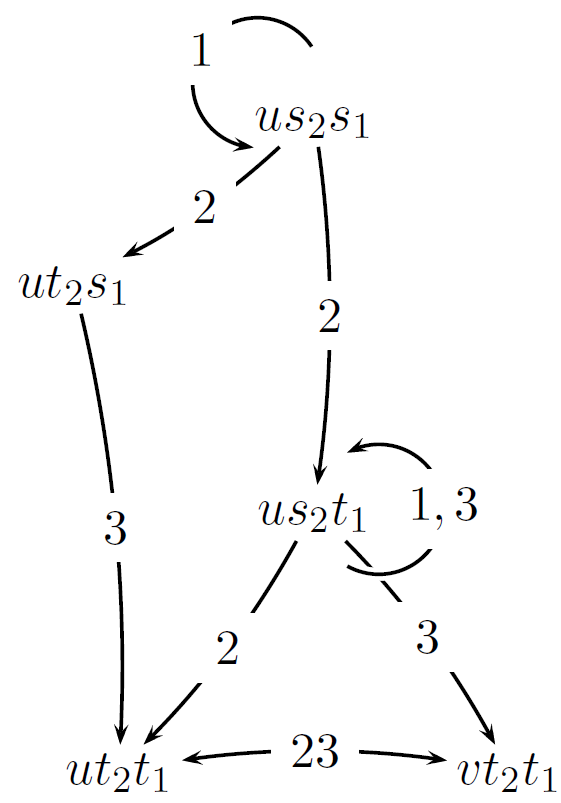}
      \end{center}
    \caption{Epistemic Model ${\cal M}_{3} =  {\cal M}_{0} \otimes \tau_{13} \otimes \tau_{12}$}
    \label{epismitecModelM3}
    \end{figure}
    
    We can show, from figure \ref{epismitecModelM3}, that Ane, Bob and Cath know $p$ ${\cal M}_3 , us_1s_2 \vdash K_1 p \land K_2 p \land K_3 p$ as expected.
  \end{example}
  \subsection{Soundness, Completeness and Decidability}\label{sat-pddel}

    \subsubsection{Soundness}
 
      We need to prove that all axioms are valid. Axioms $i$ to $xiii$ are standard from Dynamic Epistemic Logic literature and can be found in \cite{hwb}. We prove validity only for axiom \ref{concurrentActionAxiom}.
  
        \begin{lemma} $[\eta_1 \parallel ...  \parallel \eta_n] \phi \eq [Exp(\eta_1 \parallel ...  \parallel \eta_n)] \phi$ is valid.
        \label{soundnessLemma}
        \end{lemma}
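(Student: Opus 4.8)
The plan is to reduce the validity of the biconditional to a single equality of program interpretations. By the satisfaction clause for the dynamic modality (definition \ref{ea-semantic}, item 5, carried over to DELWCA through definition \ref{satisfaction-am} and the clause added in section \ref{sem-cdel}), a pointed model $({\cal M},s)$ satisfies $[\pi]\phi$ exactly when every $({\cal M}',s')$ with $({\cal M},s)\,\llbracket\pi\rrbracket\,({\cal M}',s')$ satisfies $\phi$. Hence, writing $\pi = \eta_1 \parallel \cdots \parallel \eta_n$, it suffices to prove $\llbracket\pi\rrbracket = \llbracket Exp(\pi)\rrbracket$ as relations between epistemic states: once the two relations coincide they reach precisely the same successors from every $({\cal M},s)$, so $[\pi]\phi$ and $[Exp(\pi)]\phi$ hold at exactly the same pointed models, and both directions of the biconditional fall out simultaneously. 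This reduction localises the entire argument to the interpretation of programs.

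First I would unfold $\llbracket Exp(\pi)\rrbracket$ using the compositional clauses. By Theorem \ref{exp_law}, $Exp(\pi)$ is the finite summation $\sum_{\eta_i \stackrel{\alpha}{\rightarrow}\eta_i'}\alpha.(\eta_1\parallel\cdots\parallel\eta_i'\parallel\cdots\parallel\eta_n) + \sum_{(\eta_i\stackrel{c}{\rightarrow}\eta_i')\,\&\,(\eta_j\stackrel{\overline{c}}{\rightarrow}\eta_j')}\tau_{i,j}(.).(\eta_1\parallel\cdots\parallel\eta_i'\parallel\cdots\parallel\eta_j'\parallel\cdots\parallel\eta_n)$. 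Applying the clause interpreting $+$ as union turns this into a union of the interpretations of the individual prefixed summands, and applying the clause for the prefix operator (which, as noted after axiom (xiii), is the special case of sequential composition, so $\llbracket\alpha.\eta\rrbracket = \llbracket\alpha\rrbracket\,;\,\llbracket\eta\rrbracket$) rewrites each summand as a relational composition $\llbracket\alpha\rrbracket;\llbracket(\ldots\eta_i'\ldots)\rrbracket$ or $\llbracket\tau_{i,j}(.)\rrbracket;\llbracket(\ldots\eta_i'\ldots\eta_j'\ldots)\rrbracket$. The outcome is, term by term, the set displayed as the definition of $\llbracket(\eta_1\parallel\cdots\parallel\eta_n)\rrbracket$ in section \ref{sem-cdel}.

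Next I would observe that this computed union is literally the clause adopted in section \ref{sem-cdel} as the meaning of $\llbracket(\eta_1\parallel\cdots\parallel\eta_n)\rrbracket$, so $\llbracket\pi\rrbracket=\llbracket Exp(\pi)\rrbracket$; feeding this back into the box semantics of the first paragraph gives validity at every $({\cal M},s)$. The link that must be verified with care is that the families of summands on the two sides agree: the transition rules of Table \ref{tab:semccs} must generate exactly the asynchronous moves $\eta_i\stackrel{\alpha}{\rightarrow}\eta_i'$ and the synchronising pairs that both the Expansion Law and the semantic clause range over. Since the programs of definition \ref{program-syntax} carry no recursion or iteration, each $\eta_i$ has finitely many outgoing transitions, so both sums are finite and their indexing sets can be matched outright.

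The main obstacle I anticipate is precisely this bookkeeping step rather than any deep semantic content. One must ensure the interpretation of $\parallel$ does not silently diverge from $Exp(\pi)$ by double-counting a synchronisation as two independent moves or by dropping an agent's asynchronous $\alpha$-moves; in particular the orientation of the matched channels must be pinned down, since Theorem \ref{exp_law} pairs $(\eta_i\stackrel{c}{\rightarrow}\eta_i')$ with $(\eta_j\stackrel{\overline{c}}{\rightarrow}\eta_j')$ whereas the clause in section \ref{sem-cdel} writes the bar on the other side, and one should confirm that the resulting synchronisation yields exactly the $\tau_{i,j}$ action model of definition \ref{tau} obtained by combining the matching input and output. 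Making this transition-to-summand correspondence explicit closes the gap and, together with the Expansion Law, delivers the lemma. A cleaner alternative, if one prefers to route through the process algebra, is to prove once and for all that strongly bisimilar programs (definition \ref{def:bisim}) receive equal interpretations $\llbracket\cdot\rrbracket$ and then invoke $\pi\sim Exp(\pi)$ directly; but since the semantics of $\parallel$ is already defined as its expansion, the direct unfolding above is the shortest route.
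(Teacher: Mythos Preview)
Your proposal is correct and follows essentially the same approach as the paper's own proof: reduce validity of the biconditional to the equality $\llbracket \eta_1 \parallel \cdots \parallel \eta_n \rrbracket = \llbracket Exp(\eta_1 \parallel \cdots \parallel \eta_n) \rrbracket$, then unfold the expansion side via the clauses for $+$ (union) and prefix (composition) and match it against the definitional clause for $\llbracket \parallel \rrbracket$ from section \ref{sem-cdel}. Your version is in fact more explicit about the bookkeeping (finiteness of the summations, orientation of the matched channels, and the identification of the synchronisation with the $\tau_{i,j}$ action model) than the paper's rather terse appendix argument.
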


    \subsubsection{Completeness}

      The proof of completeness is similar to the proof for Public Announcement and Action Models Logics introduced in \cite{DBLP:journals/iandc/BenthemEK06} Dynamic Epistemic Logic. We prove completeness showing that every formula in DELWCA is equivalent to formula in Epistemic Logic. In order to achieve that we only have to provide a translation function that translate every DELWCA formula to a formula without communication actions and concurrency.

    \subsubsection{Decidability}

      Decidability follows directly from the decidability of $ {\bf S5_a}$.
  
\section{Conclusions}
  
  In this work we present a Dynamic Epistemic Logic with Communication Actions that can be performed concurrently. In order to achieve that we propose a PDL like language for actions and develop a small process calculus.  We show that it's easy to model problems of communication and concurrency with the proposed dynamic epistemic logic. The main feature of it is the Expansion rule which allows for representing the parallel composition operator. This approach is similar to the one introduced in \cite{BS08,BS10}. 
  
  We represent communication actions as private Action Models where the relations are not equivalence relations. We present an axiomatization and prove completeness using reduction technique.
 
  As future work we would like to investigate the extension with common knowledge and/or iteration operators, study other types of communications where agents are not reliable or not trustful, extend this to Dynamic Epistemic Logic With Post-Conditions and change DEMO, or create a new Model Checker, to deal with concurrency and communication.

\bibliographystyle{entcs}
\bibliography{references}

\begin{thebibliography}{10}
\expandafter\ifx\csname url\endcsname\relax
  \def\url#1{\texttt{#1}}\fi
\expandafter\ifx\csname urlprefix\endcsname\relax\def\urlprefix{URL }\fi
\newcommand{\enquote}[1]{``#1''}

\bibitem{baltag04}
Baltag, A. and L.~Moss, \emph{Logics for epistemic programs}, Synthese  (2004),
  pp.~165--224.

\bibitem{baltag98}
Baltag, A., L.~Moss and Solecki, \emph{The logic of common knowledge, public
  announcements and private suspition}, in: I.~Gilboa, editor, \emph{The
  $7^{th}$ Conference on Theoretical Aspects of Rationality and Knowledge (TARK
  98)}, 1998 pp. 43--56.

\bibitem{BS08}
Benevides, M. and L.~Schechter, \emph{A propositional dynamic logic for
  $\textrm{CCS}$ programs}, in: \emph{Proceedings of the XV Workshop on Logic,
  Language, Information and Computation},  LNAI  \textbf{5110} (2008), pp.
  83--97.

\bibitem{BS10}
Benevides, M. and L.~Schechter, \emph{A propositional dynamic logic for
  concurrent programs based on the $\pi$-calculus}, in: \emph{Proceedings of
  Methods for Modalities},  Electronic Notes in Theoretical Computer Science
  ENTCS  \textbf{262} (2010), pp. 49--64.

\bibitem{halpern95}
Fagin, R., J.~Halpern, Y.~Moses and M.~Vardi, \enquote{Reasoning about
  Knowledge,} MIT Press, USA, 1995.

\bibitem{jelle}
Gerbrandy, J. and W.~Groeneveld, \emph{Reasoning about information change},
  Journal of Logic, Language, and Information  (1997), pp.~147--169.

\bibitem{harel}
Harel, D., \enquote{Dynamic Logic,} Handbook of Philosophical Logic,
  Dordrecht:Reidel, Vol.2, 1984, ed. D. Gabbay and F. Guenthner.

\bibitem{milner89}
Milner, R., \enquote{Communication and Concurrency,} Prentice Hall, 1989.

\bibitem{DPEL87}
Peleg, D., \emph{Communication in concurrent dynamic logic}, Journal of
  Computer and System Sciences \textbf{35} (1987).

\bibitem{plazza}
Plazza, J.~A., \enquote{Logics of Public Communications,} Proceedings of the
  $4^{th}$ International Symposium on Methodologies for Intelligent Systems,
  North-Holland: 201--216, 1989.

\bibitem{floor}
Sietsma, F., \enquote{Logics of Communication and Knowledge,} Ph.D. thesis,
  University of Amsterdam (CWI-ILLC) (2012).

\bibitem{DBLP:journals/iandc/BenthemEK06}
van Benthem, J., J.~van Eijck and B.~P. Kooi, \emph{Logics of communication and
  change}, Inf. Comput. \textbf{204} (2006), pp.~1620--1662.

\bibitem{hcdel}
van Ditmarsch, H., \enquote{The Semantics of Concurrent Knowledge Actions,}
  Workshop on Logic and Games, Ed. M. Pauly and G. Sandu, ESSLLI, 2001.

\bibitem{hwbcdel}
van Ditmarsch, H., W.~van~der Hoek and B.~Kooi, \enquote{Concurrent Dynamic
  Epistemic Logic,} Kluwer, Ed. V.F. Hendricks et al., , vol. 322, 2003.

\bibitem{hwb}
van Ditmarsch, H., W.~van~der Hoek and B.~Kooi, \enquote{Dynamic Epistemic
  Logic,} Synthese Library Series, volume 337, Springer, The Netherland, 2008.

\bibitem{glabbleek}
van Glabbleek, R.~J., \enquote{The Linear Time - Branching Time Spectrum I: The
  Semantics of Concrete, Sequential Processes,} Elsevier, Handbook of Process
  Algebra, 2001.

\end{thebibliography}


\begin{appendices}
  
  \section{Soundness Proof}
    We need to prove Lemma \ref{soundnessLemma}.

    \begin{proof} 
      We have to show that $(1) \eq (2)$, where $(1)$ is $[\eta_1 \parallel ...  \parallel \eta_n] \phi$ and $(2)$ is $[Exp(\eta_1 \parallel ...  \parallel \eta_n)] \phi$.
      
      We can represent $(1)$ and $(2)$ like this:
      \begin{itemize}
      \item[] $(1) = (s | \forall y(s,s') \in \llbracket \eta_1 \parallel ...  \parallel \eta_n \rrbracket \imp s' \in \llbracket \phi \rrbracket )$
      \item[] $(2) = (s | \forall y(s,s') \in \llbracket Exp (\eta_1 \parallel ...  \parallel \eta_n) \rrbracket \imp s' \in \llbracket \phi \rrbracket )$
      \end{itemize}
      So we need to show that 
      \begin{itemize}
      \item[] $\llbracket \eta_1 \parallel ...  \parallel \eta_n \rrbracket \eq \llbracket Exp (\eta_1 \parallel ...  \parallel \eta_n) \rrbracket$ . $(3) \eq (4)$.
      \end{itemize}

      Using the definition \ref{satisfaction-am} we have that
      \begin{itemize}
      \item[] $(3)$ = $ \llbracket \tau \rrbracket ; \llbracket \eta_1 \parallel ... \eta^{'}_i ... \eta^{'}_j ... \parallel \eta_n \rrbracket \bigcup \llbracket \alpha \rrbracket ; \llbracket \eta_1 \parallel ... \eta^{'}_i ... \parallel \eta_n \rrbracket \bigcup  \llbracket \beta \rrbracket ; \llbracket \eta_1 \parallel ... \eta^{'}_j ... \parallel \eta_n \rrbracket  $
      \end{itemize}
      
      Using the expansion law we have that 
      \begin{itemize}
      
        \item[] $(4)$ = $\llbracket \sum_{(\eta_i \stackrel{c}{\rightarrow} \eta_i') \& (\eta_j \stackrel{\overline{c}}{\rightarrow} \eta_j')} \tau_{i,j}(.) . 
        (\eta_1 \parallel ... \parallel \eta_i' \parallel ... \parallel \eta_j' \parallel ... \parallel \eta_n) \\ + \sum_{(\eta_k \stackrel{\alpha}{\rightarrow} \eta_k')} \alpha_k . (\eta_1 \parallel ... \parallel \eta_k' \parallel ... \parallel \eta_n) + \sum_{(\eta_k \stackrel{\beta}{\rightarrow} \eta_k')} \beta_k . (\eta_1 \parallel ... \parallel \eta_k' \parallel ... \parallel \eta_n) \rrbracket$
        \item[] $(4)$ = $\llbracket \sum_{(\eta_i \stackrel{c}{\rightarrow} \eta_i') \& (\eta_j \stackrel{\overline{c}}{\rightarrow} \eta_j')} \tau_{i,j}(.) . 
        (\eta_1 \parallel ... \parallel \eta_i' \parallel ... \parallel \eta_j' \parallel ... \parallel \eta_n) \rrbracket \\ \bigcup \llbracket \sum_{(\eta_k \stackrel{\alpha}{\rightarrow} \eta_k')} \alpha_k . (\eta_1 \parallel ... \parallel \eta_k' \parallel ... \parallel \eta_n) \rrbracket \bigcup \llbracket \sum_{(\eta_k \stackrel{\beta}{\rightarrow} \eta_k')} \beta_k . (\eta_1 \parallel ... \parallel \eta_k' \parallel ... \parallel \eta_n) \rrbracket$
        
        \item[] Since in $(2)$ we are using $(s,s')$ we can omite the $\sum$

        \item[] $(4)$ = $\llbracket \tau_{i,j}(.) . 
        (\eta_1 \parallel ... \parallel \eta_i' \parallel ... \parallel \eta_j' \parallel ... \parallel \eta_n) \rrbracket \\ \bigcup \llbracket \alpha_k . (\eta_1 \parallel ... \parallel \eta_k' \parallel ... \parallel \eta_n) \rrbracket \bigcup \llbracket \beta_k . (\eta_1 \parallel ... \parallel \eta_k' \parallel ... \parallel \eta_n) \rrbracket$
        
        \item[] Using definition 9 we have
        \item[] $(4)$ = $\llbracket \tau_{i,j}(.) \rrbracket ; \llbracket 
        (\eta_1 \parallel ... \parallel \eta_i' \parallel ... \parallel \eta_j' \parallel ... \parallel \eta_n) \rrbracket \\ \bigcup \llbracket \alpha_k \rrbracket ; \llbracket (\eta_1 \parallel ... \parallel \eta_k' \parallel ... \parallel \eta_n) \rrbracket \bigcup \llbracket \beta_k \rrbracket ; \llbracket (\eta_1 \parallel ... \parallel \eta_k' \parallel ... \parallel \eta_n) \rrbracket$
        
      \end{itemize}

      So $(1) \eq (2)$
      
    \end{proof}
  \clearpage
  \section{Completeness Proof}
     We need to provide a translation function that translate every DELWCA formula to a formula without communication actions. 

    \begin{definition}\label{traducaoAM}     
      The translation function t: ${\cal L}_{DELWCA} \to {\cal L}_K$ is defined as follows: 
          
      \[ \begin{array}{lcl} 
        t(p) & = & p \\ 
        t(\neg \varphi) & =  & \neg t(\varphi) \\ 
        t(\varphi \land \psi) & = & t(\varphi) \land t(\psi) \\ 
        t(K_a \varphi) & = & K_a t(\varphi) \\ 
        t( [\amodel,\actiona] p ) & = & t ( \pre(\actiona)  \to p )) \\ 
        t( [\amodel,\actiona] \neg \varphi ) & = & t ( \pre(\actiona) \to \neg [\amodel,\actiona] \varphi  ) \\ 
        t( [\amodel,\actiona] (\varphi \land \psi) ) & = & t ( [\amodel,\actiona] \varphi  \land  [\amodel,\actiona] \psi  ) \\ 
        t( [\amodel,\actiona] K_a \varphi ) & = & t ( \pre(\actiona) \to K_a [\amodel,\actiona] \varphi  ) \\ 
        t( [\pi_1] [\pi_2] \varphi ) & = & t ( [\pi_1;\pi_2] \varphi ) \\ 
        t([\pi_1 + \pi_2] \varphi) & = & t([\pi_1]\varphi) \land t([\pi_2]\varphi) \\ 
        t( [\alpha] [\pi] \varphi ) & = & t ( [\alpha;\pi_1] \varphi ) \\ 
          
        t([\eta_1 \parallel ...  \parallel \eta_n] \varphi) & = & t([Exp(\eta_1 \parallel ...  \parallel \eta_n)] \varphi) \\ 
          
        t([{\bf 0}] \varphi) & = & t( \varphi) \\ 
        t( [\amodel,\actiona] [\pi] \varphi ) & = & t ( \pre(\actiona)  \to [\pi] \varphi ) \\ 
      \end{array} \] 
          
      \end{definition} 
        
      In order to prove completeness we need to prove that every CDEL formula can be proved (in the axiomatic system) equivalent to its translation. This proof is by induction on the {\it complexity} of each formula which is defined below. 
        
    Let $k$ be the number of all possible communications that can occur in $\eta_1 \parallel ...  \parallel \eta_n$, i.e. all pairs $(\eta_i \stackrel{s_{i,j}(.)}{\rightarrow} \eta_i')$ and $ (\eta_j \stackrel{r_{i,j}(.)}{\rightarrow} \eta_j')$, for $1 \leq i,j \leq n$. 
    
    \begin{definition}\label{complexidadeAM} 
    
      The complexity $c : {\cal L}_{DELWCA} \to \nat $ is defined as follows: 
          
      \[ \begin{array}{lcl} 
          
        c({\bf 0}) & = & 1 \\ 
        c(\alpha;\pi) & = & 1 + c(\pi)\\ 
        c(\pi_1 + \pi_2) & = & 2 + max\{c(\pi_1),c(\pi_2)\}\\ 
        c(\pi_1 ; \pi_2) & = & 1 + c(\pi_1) + c(\pi_2)\\ 
        c(\eta_1 \parallel ...  \parallel \eta_n) & = & k + c(\eta_1) + \cdots + c(\eta_n)\\ 
        &&\\ 
        c(p) & = & 1 \\ 
        c(\neg \varphi) & =  & 1 + c(\varphi) \\ 
        c(\varphi \land \psi) & = & 1 + max ( c(\varphi), c(\psi) ) \\ 
        c(K_a \varphi) & = & 1 + c(\varphi) \\ 
        c([\pi] \varphi) & = & (4 + c(\pi)) * c(\varphi) \\ 
        c(\amodel,\actiona) & = & max \{ c( \pre(t)) \mid t \in \amodel \}\\ 
        c([{\bf 0}]) \varphi) & = & 1+ c( \varphi) \\ 
          
      \end{array} \]

    \end{definition}

    We have to prove that the complexity of a formula is strictly greater than the complexity of its translation. In order to achieve that we prove the next lemma that assures that after a communication the complexity always decreases. 
    
    \begin{lemma} \label{comm} If $\pi \stackrel{\alpha}{\rightarrow} \pi'$, then $c(\pi) > c(\pi')$. 
      
    \end{lemma}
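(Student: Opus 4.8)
The plan is to argue by induction on the height of the derivation of $\pi \stackrel{\alpha}{\rightarrow} \pi'$, splitting into cases according to the last transition rule of Table \ref{tab:semccs} that is applied. Because those rules are syntax-directed this is essentially structural induction on $\pi$, and in each case I compute $c(\pi')$ from Definition \ref{complexidadeAM} and compare it with $c(\pi)$. Before starting I would fix once and for all that the terminated process $\surd$ has complexity $0$ (consistent with the paper's convention of writing $\pi$ for $\pi;\surd$, $\pi+\surd$ and the like, so that terminated branches are simply erased). The two axiom cases are then immediate: $c(\alpha.\pi_1) = 1 + c(\pi_1) > c(\pi_1)$ for the prefix rule $\alpha.\pi_1 \stackrel{\alpha}{\rightarrow} \pi_1$, and $c(\alpha) \geq 1 > 0 = c(\surd)$ for $\alpha \stackrel{\alpha}{\rightarrow} \surd$, since every precondition of an action model has complexity at least $1$.

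For the sequential and choice rules the induction hypothesis does all the work. If $\pi_1;\pi_2 \stackrel{\alpha}{\rightarrow} \pi_1';\pi_2$ is obtained from $\pi_1 \stackrel{\alpha}{\rightarrow} \pi_1'$, then $c(\pi_1) > c(\pi_1')$ gives $c(\pi_1;\pi_2) = 1 + c(\pi_1) + c(\pi_2) > 1 + c(\pi_1') + c(\pi_2) = c(\pi_1';\pi_2)$, and the degenerate subcase $\pi_1' = \surd$ (where $\pi' = \pi_2$) holds because $1 + c(\pi_1) > 0$. For $\pi_1 + \pi_2 \stackrel{\alpha}{\rightarrow} \pi_1'$ (and symmetrically for the right summand), $c(\pi_1 + \pi_2) = 2 + \max\{c(\pi_1),c(\pi_2)\} \geq 2 + c(\pi_1) > c(\pi_1')$. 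These cases use only the additive and maximum shape of $c$ and are routine.

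The two rules for parallel composition are where the real content lies, and I expect the control of the term $k$ to be the main obstacle. For an interleaving step coming from $\eta_i \stackrel{\alpha}{\rightarrow} \eta_i'$ the target is $\eta_1 \parallel \cdots \parallel \eta_i' \parallel \cdots \parallel \eta_n$, of complexity $k' + \sum_{m} c(\eta_m) - c(\eta_i) + c(\eta_i')$, against a source of complexity $k + \sum_{m} c(\eta_m)$, where $k,k'$ count the possible communications before and after the step. The induction hypothesis supplies $c(\eta_i') < c(\eta_i)$, so the inequality reduces to showing $k' \leq k$. This is exactly the delicate point: an internal step might expose a fresh communication capability and naively inflate the count. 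I would dispose of it by reading $k$ as the number of matched send/receive pairs among \emph{all} communication actions occurring in the components $\eta_m$ (not only those enabled at the top level), and by proving the auxiliary fact that any transition merely consumes actions, so that the multiset of communication actions of $\eta_i'$ is contained in that of $\eta_i$; hence no new matched pair can arise and $k' \leq k$. The same auxiliary fact settles the synchronization rule: from $\eta_i \stackrel{\overline{c}}{\rightarrow} \eta_i'$ and $\eta_j \stackrel{c}{\rightarrow} \eta_j'$ the hypothesis gives $c(\eta_i') < c(\eta_i)$ and $c(\eta_j') < c(\eta_j)$ while $k' \leq k$, so the total strictly decreases. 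Once the monotonicity of $k$ under transitions is stated and proved cleanly, everything else is bookkeeping against Definition \ref{complexidadeAM}.
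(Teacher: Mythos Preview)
Your approach is essentially the same as the paper's: structural induction on $\pi$ (equivalently, on the derivation of the transition), with a case for each rule of Table~\ref{tab:semccs}. The base cases and the cases for prefix, choice and sequential composition are handled identically in both proofs.

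The one noteworthy difference is in the parallel case. The paper's proof only treats the synchronisation rule and, in the computation, silently writes the same $k$ on both sides of the inequality, i.e.\ it implicitly assumes that the number of possible communications does not change (or at least does not increase) after a step, without justifying this. You, by contrast, treat both the interleaving rule and the synchronisation rule, and you explicitly isolate the obligation $k' \leq k$ and discharge it via the auxiliary observation that transitions only consume communication actions (no new matched send/receive pairs can appear). This is a genuine refinement: your argument closes a small gap that the paper's proof leaves open. The trade-off is that you must commit to a precise reading of $k$ (matched pairs over all syntactic occurrences, not just currently enabled ones), whereas the paper leaves $k$ somewhat informal; your reading is the natural one and makes the complexity measure well defined.
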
 
     
    \begin{proof} By induction on $| \pi |$. 
      \noindent Base: It holds for ${\bf 0}$, once ${\bf 0}  \stackrel{\alpha}{\not \rightarrow} $, 
      
      \noindent Induction Hypothesis: it holds for $| \pi | < m$. 
        
      \begin{enumerate} 
        
        \item $\pi = \alpha. \pi'$: we know that $\pi \stackrel{\alpha}{\rightarrow} \pi'$ and $c(\alpha. \pi) = 1+ c(\pi')$, so $c(\pi) > c(\pi')$. 
          
        \item $\pi = \pi_1 + \pi_2$: either $\pi_1 \stackrel{\alpha}{\rightarrow} \pi_1'$ or $\pi_2 \stackrel{\alpha}{\rightarrow} \pi_2'$. By the semantics rules of table \ref{tab:semccs} either $\pi \stackrel{\alpha}{\rightarrow} \pi_1' $ (1) or $\pi \stackrel{\alpha}{\rightarrow}  \pi_2'$ (2). By the induction hypothesis $c(\pi_1) > c(\pi_1')$ and $c(\pi_2) > c(\pi_2')$. As 
          
        $c(\pi) = 2 + max\{\pi_1,\pi_2\}$ 
          
        From (1) $c(\pi) >  c(\pi_1) > c(\pi_1')$ 
          
        From (2) $c(\pi) >  c(\pi_2) > c(\pi_2')$ 
          
        \item $\pi = \pi_1 ; \pi_2$: if $\pi_1 \stackrel{\alpha}{\rightarrow} \pi_1'$, then, by the semantics rules of table \ref{tab:semccs}, $\pi \stackrel{\alpha}{\rightarrow} \pi_1' ; \pi_2$. By the induction hypothesis $c(\pi_1) > c(\pi_1')$. As 
          
        $c(\pi) = 1 + c(\pi_1) + c(\pi_2) > 1 + c(\pi_1') + c(\pi_2) = c(\pi_1' ; \pi_2)$ 
          
        \item $\pi = \eta_1 \parallel ...  \parallel \eta_n$: if $\pi \stackrel{\tau_{i,j}(.)}{\rightarrow} \pi'$, then there exists 
        $(\eta_i \stackrel{s_{i,j}(.)}{\rightarrow} \eta_i')$ and $ (\eta_j \stackrel{r_{i,j}(.)}{\rightarrow} \eta_j')$, for $1 \leq i,j \leq n$. And $\pi' = \eta_1 \parallel ... \parallel \eta_i' \parallel ... \parallel \eta_j' \parallel ... \parallel \eta_n$. By the induction hypothesis $c(\eta_i) > c(\eta_i')$ and $c(\eta_j) > c(\eta_j')$. So, 
          
        $c(\pi)  =  k + c(\eta_1) + \cdots + c(\eta_i) + \cdots + c(\eta_j) +\cdots + c(\eta_n) >  k + c(\eta_1) + \cdots + c(\eta_i') + \cdots + c(\eta_j') +\cdots + c(\eta_n) = c(\pi')$.        
      \end{enumerate}
    \end{proof}

    \begin{cor}\label{comp-exp} $c(\pi = \eta_1 \parallel ...  \parallel \eta_n) > c(Exp(\eta_1 \parallel ...  \parallel \eta_n)$     
    \end{cor}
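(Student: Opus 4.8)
The plan is to reduce the statement to a termwise application of Lemma~\ref{comm}. First I would use the Expansion Law (Theorem~\ref{exp_law}) to write $Exp(\eta_1 \parallel \dots \parallel \eta_n)$ explicitly as a finite nondeterministic sum $\sum_{r=1}^{m} \alpha_r.\pi_r$, where each summand corresponds either to a single transition $\eta_i \stackrel{\alpha_r}{\rightarrow} \eta_i'$, with $\pi_r = \eta_1 \parallel \dots \parallel \eta_i' \parallel \dots \parallel \eta_n$, or to a communication pair, with $\alpha_r = \tau_{i,j}(.)$ and $\pi_r$ the composition in which both $\eta_i$ and $\eta_j$ have moved. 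In every case $\pi \stackrel{\alpha_r}{\rightarrow} \pi_r$ is a one-step transition of the parallel composition, so Lemma~\ref{comm} applies and yields $c(\pi_r) < c(\pi)$ for each $r$.

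Next I would unfold $c(Exp(\pi))$ using the clauses of Definition~\ref{complexidadeAM}. The prefix clause gives $c(\alpha_r.\pi_r) = 1 + c(\pi_r)$, and iterating the binary choice clause $c(\pi_1 + \pi_2) = 2 + \max\{c(\pi_1),c(\pi_2)\}$ over the $m$ summands yields $c(Exp(\pi)) = 2(m-1) + 1 + \max_{r} c(\pi_r)$ (with the degenerate reading $c(Exp(\pi)) = 1 + c(\pi_1)$ when $m=1$). Substituting the bound $c(\pi_r) < c(\pi) = k + \sum_i c(\eta_i)$ from the previous step is what should close the argument, the intended conclusion being that every reachable continuation is strictly simpler than $\pi$ itself.

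The hard part will be controlling the additive overhead $2(m-1)$ introduced by the $m$-fold summation. A purely termwise use of Lemma~\ref{comm} only gives $\max_r c(\pi_r) \le c(\pi) - 1$, hence the weak estimate $c(Exp(\pi)) \le c(\pi) + 2(m-1)$, which points the \emph{wrong} way as soon as $\pi$ has several enabled transitions. To obtain the strict decrease one actually needs the aggregate drop per transition to dominate the branching, i.e.\ $\max_r c(\pi_r) \le c(\pi) - 2m$, and this is where the count $k$ in the parallel clause $c(\eta_1 \parallel \dots \parallel \eta_n) = k + \sum_i c(\eta_i)$ must pull its weight: in communication-rich compositions each $\tau_{i,j}$ summand removes two prefixes and several pairs at once, supplying the needed slack, whereas in choice-heavy, communication-poor compositions the overhead is only barely absorbed. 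I therefore expect the genuine obstacle to be a tight, quantitative accounting of the weights in Definition~\ref{complexidadeAM} rather than the structural decomposition; a careful induction must track the exact constants (and may require a mild retuning of the choice weight, e.g.\ replacing $2+\max$ by $1+\max$ on expanded summands) for the inequality $c(\pi) > c(Exp(\pi))$ to hold uniformly rather than merely generically.
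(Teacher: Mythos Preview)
Your plan is exactly the paper's approach: unfold $Exp(\pi)$ as a finite sum of prefixed continuations, compute its complexity via the prefix and choice clauses of Definition~\ref{complexidadeAM}, and then invoke Lemma~\ref{comm} on each continuation. The paper's proof is extremely terse: it restricts attention to the $\tau_{i,j}$ summands only (so the number of summands is taken to be $k$, the count already appearing in the parallel clause), writes
\[
c\Big(\sum \tau_{i,j}.(\eta_1\parallel\dots\parallel\eta_i'\parallel\dots\parallel\eta_j'\parallel\dots\parallel\eta_n)\Big)=(k-1)+\max_{i,j}c(\tau_{i,j}.(\dots))=k+\max_{i,j}c(\dots),
\]
and then applies Lemma~\ref{comm} to each continuation to conclude.

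The overhead problem you isolate is genuine, and the paper does not resolve it either. With the stated clause $c(\pi_1+\pi_2)=2+\max\{c(\pi_1),c(\pi_2)\}$, a $k$-fold sum costs $2(k-1)+\max$, not $(k-1)+\max$; moreover the non-communicating summands $\alpha.(\eta_1\parallel\dots\parallel\eta_i'\parallel\dots\parallel\eta_n)$ from the Expansion Law are simply omitted in the paper's calculation. So your worry that ``a mild retuning of the choice weight'' may be needed is well founded: the paper's argument implicitly relies on a $1+\max$ convention for $+$ and on ignoring the action-model branches, neither of which is licensed by Definition~\ref{complexidadeAM} as written. In short, your decomposition and your use of Lemma~\ref{comm} match the paper exactly; the additional quantitative accounting you ask for is precisely what the paper's own proof leaves out.
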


    \begin{proof} By the definition of $Exp$  
    
      $c(Exp(\eta_1 \parallel ...  \parallel \eta_n)  = $ \\ 
      $c( \sum \tau_{i,j} . (\eta_1 \parallel ... \parallel \eta_i' \parallel ... \parallel \eta_j' \parallel ... \parallel \eta_n)) =$\\ 
      $k-1 + max_{ij}(c(\tau_{i,j} . (\eta_1 \parallel ... \parallel \eta_i' \parallel ... \parallel \eta_j' \parallel ... \parallel \eta_n)) =$\\ 
      $k + max_{ij}(c( (\eta_1 \parallel ... \parallel \eta_i' \parallel ... \parallel \eta_j' \parallel ... \parallel \eta_n)) =$ 
      We know that  
        
      $\eta_1 \parallel ...  \parallel \eta_n \stackrel{\tau_{i,j}(.)}{\rightarrow} \eta_1 \parallel ... \parallel \eta_i' \parallel ... \parallel \eta_j' \parallel ... \parallel \eta_n$, 
        
      by lemma \ref{comm} 
        
      $c(\eta_1 \parallel ...  \parallel \eta_n) > c( \eta_1 \parallel ... \parallel \eta_i' \parallel ... \parallel \eta_j' \parallel ... \parallel \eta_n)$, Then $c(\pi = \eta_1 \parallel ...  \parallel \eta_n) > c(Exp(\eta_1 \parallel ...  \parallel \eta_n)$. 
    \end{proof}
    
    \begin{lemma}\label{complexidadeAMLema} 
      For all $\varphi$ and $ \psi$: 
              
      \begin{enumerate} 
      \item $c(\psi) \geq c(\varphi)$ se  $\varphi \in Sub(\psi)$  
      \item $c([\amodel,\actiona]p) > c(\pre(\action)  \to p))$ 
      \item $c([\amodel,\actiona]\neg \varphi) > c(\pre(\action) \to \neg [\amodel,\actiona] \varphi )$ 
      \item $c([\amodel,\actiona] (\varphi \land \psi) ) > c( [\amodel,\actiona] \varphi \land [\amodel,\actiona] \psi )$ 
      \item $c([\amodel,\actiona] K_a \varphi) > c(\pre(\action) \to K_a [\amodel,\actiona] \varphi )$ 
      \item $c([\pi_1] [\pi_2] \varphi) > c([\pi_1 ; \pi_2] \varphi )$ 
      \item $c([\pi_1 + \pi_2] \varphi ) > c( [\pi_1] \varphi \land [\pi_2] \varphi)$ 
      \item $c( [\alpha] [\pi] \varphi ) >  c( [\alpha;\pi_1] \varphi )$ 
        
      \item $c([\eta_1 \parallel ...  \parallel \eta_n] \varphi) > c([Exp(\eta_1 \parallel ...  \parallel \eta_n)] \varphi)$ 
              
      \item $c([{\bf 0}] \varphi) > c( \varphi) $ 
      \end{enumerate}            
              
    \end{lemma}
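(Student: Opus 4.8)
The plan is to isolate two preliminary facts that every clause depends on, dispatch clause~1 by structural induction, reduce clauses~2--8 and~10 to elementary arithmetic, and obtain clause~9 from the earlier expansion estimate. First I would record the two observations. \emph{(a)} Every formula $\varphi$ and every program $\pi$ satisfies $c(\varphi) \geq 1$ and $c(\pi) \geq 1$: this is immediate from the base cases $c(p) = c({\bf 0}) = 1$ together with the fact that each clause of Definition~\ref{complexidadeAM} either adds a positive constant or takes a maximum of values already $\geq 1$, while $c([\pi]\varphi) = (4 + c(\pi))\cdot c(\varphi) \geq 5$. \emph{(b)} For every precondition $\pre(\actionb)$ with $\actionb \in \amodel$ we have $c(\pre(\actionb)) \leq c(\amodel,\actiona)$, directly from $c(\amodel,\actiona) = \max\{c(\pre(\actionb)) \mid \actionb \in \amodel\}$; more generally $c(\pi) \geq c(\chi)$ for every precondition $\chi$ occurring in $\pi$, by an easy induction on the structure of $\pi$, since every program clause sums in, or maxes over, the complexities of its constituents.

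Clause~1 (monotonicity under the subformula relation) I would prove by structural induction on $\psi$, using that in each clause of Definition~\ref{complexidadeAM} the value of a compound dominates the value of each immediate formula-constituent: $c(\neg\varphi) = 1 + c(\varphi) \geq c(\varphi)$, $c(\varphi \et \psi) = 1 + \max(c(\varphi),c(\psi)) \geq c(\varphi),c(\psi)$, $c(\knows_a\varphi) = 1 + c(\varphi) \geq c(\varphi)$, and $c([\pi]\varphi) = (4 + c(\pi))\,c(\varphi) \geq c(\varphi)$ while, by observation~(b), also $c([\pi]\varphi) \geq c(\chi)$ for every precondition $\chi$ inside $\pi$. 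Composing these inequalities along the path from $\psi$ down to $\varphi$ yields $c(\psi) \geq c(\varphi)$.

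For clauses~2--8 and~10 the strategy is direct computation of both sides, reading the implication as the abbreviation $\varphi \imp \psi := \neg(\varphi \et \neg\psi)$ so that its complexity is governed by the $\neg$ and $\et$ clauses. The recurring pattern is that the modal clause contributes the additive constant~$4$: writing $M = c(\amodel,\actiona)$, the left modality expands as $(4+M)\cdot c(\cdot)$, whereas each right-hand translation step either keeps the same coefficient $(4+M)$ and adds a smaller constant or replaces a program by a strictly cheaper one. For clauses~3 and~5 I would compute $c(\pre(\actiona) \imp \cdot)$, use observation~(b) to get $c(\pre(\actiona)) \leq M$ so the outer $\max$ resolves to the modal summand, after which the surviving inequality reduces to $4 + M > 4$, true since $M \geq 1$. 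Clauses~2,~4,~6,~7,~8,~10 collapse similarly to trivially true inequalities (for instance clause~6 to $(4+c(\pi_1))(4+c(\pi_2)) > 5 + c(\pi_1) + c(\pi_2)$ and clause~7 to $2c(\varphi) > 1$), all using only $c(\cdot) \geq 1$. Clause~9 is where the earlier machinery enters: both sides have the form $(4 + c(\cdot))\cdot c(\varphi)$, so since $c(\varphi) \geq 1$ it suffices to show $c(\eta_1 \parallel \cdots \parallel \eta_n) > c(Exp(\eta_1 \parallel \cdots \parallel \eta_n))$, which is exactly Corollary~\ref{comp-exp} (itself resting on Lemma~\ref{comm}); monotonicity of $x \mapsto (4+x)\,c(\varphi)$ transports the strict inequality through.

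The main obstacle I anticipate is bookkeeping rather than conceptual: getting the complexity of the derived connective $\imp$ right and verifying that the outer maximum in clauses~3 and~5 always resolves to the modal term --- which is precisely where observation~(b) and the convention $c(\cdot) \geq 1$ are indispensable --- together with checking that the $+4$ constant in the modality clause always makes each inequality \emph{strict}. The only clause demanding substantive prior results is clause~9, which is in essence a restatement of Corollary~\ref{comp-exp}.
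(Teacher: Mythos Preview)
Your proposal is correct and follows essentially the same strategy as the paper's proof: direct arithmetic computation of both sides of each inequality from Definition~\ref{complexidadeAM}, with clause~9 deferred to Corollary~\ref{comp-exp}. The only visible difference is that the paper simply declares $c(\phi \imp \psi) = 2 + c(\phi) + c(\psi)$ (sum-based) rather than deriving a max-based value from the abbreviation $\neg(\phi \et \neg\psi)$ as you do; this changes the bookkeeping in clauses~2,~3,~5 slightly (in particular your observation~(b) and the ``max resolves to the modal summand'' check become unnecessary under the paper's convention), but both routes yield the required strict inequalities.
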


    \begin{proof} The proofs of  1, 3, 4 e 5 is straightforward from definition \ref{complexidadeAM}.    
        \begin{itemize}
          \item[2.] We know that  $c(\phi \to \psi) = 2 + c(\phi) + c(\psi)$ 
            \begin{itemize} 
              \item $c([\amodel,\actiona]p) =  (4 + c(\amodel,\actiona)) * c(p) = 4 + max \{ c( \pre(t) \mid t \in \amodel \}$\\ 
              
              \item $c(\pre(\action)  \to p) = 2 + c(\pre(\action) ) +  1 = 3 + c(\pre(\action) ) $\\ 
                
            \end{itemize} 
                
            Therefore, $c([\amodel,\actiona]p) > c(\pre(\action)  \to p))$
          \item[6.]  $c([\pi_1 ; \pi_2] \varphi ) = (4+c(\pi_1 ; \pi_2).c(\varphi)) $ 
    
              $c([\pi_1 ; \pi_2] \varphi ) = (5+c(\pi_1) + c(\pi_2)).c(\varphi))$ 
              
              $c([\pi_1] [\pi_2] \varphi) =   (4+c(\pi_1)) . c([\pi_2]\varphi)$ 
              
              $c([\pi_1] [\pi_2] \varphi) =   (4+c(\pi_1)) . (4+c(\pi_2)).c(\varphi))$ 
              
              $c([\pi_1] [\pi_2] \varphi) =   (16+ 4(c(\pi_1) + c(\pi_2)) + c(\pi_1) c(\pi_2))c(\varphi))$ 
              
              $c([\pi_1] [\pi_2] \varphi) >   c([\pi_1 ; \pi_2] \varphi )$   
          \item[7.]  $c([\pi_1  + \pi_2] \varphi ) = (4+c(\pi_1 + \pi_2).c(\varphi)) $ 
    
            $c([\pi_1 + \pi_2] \varphi ) = (6+max\{c(\pi_1) , c(\pi_2)\}).c(\varphi))$ 
               
            $c([\pi_1] \varphi \land  [\pi_2] \varphi) =   (1+max\{c([\pi_1] \varphi) , c([\pi_2] \varphi)\})$ 
                
            $c([\pi_1] \varphi \land  [\pi_2] \varphi) =   (1+max\{ (4+c(\pi_1))c( \varphi) , (4+c(\pi_2)) c(\varphi)\})$ 
                
            $c([\pi_1] \varphi \land  [\pi_2] \varphi) =   (5 + max\{ c(\pi_1) , c(\pi_2)\}) c(\varphi)$ 
                
            $c([\pi_1  + \pi_2] \varphi ) >  c([\pi_1] \varphi \land  [\pi_2] \varphi) $ 
                
          \item[8.] this case is analogous to 6. 
          \item[9.]  
    
            \begin{itemize} 
              \item $c([\eta_1 \parallel ...  \parallel \eta_n] \varphi)) =  (4 + c(\eta_1 \parallel ...  \parallel \eta_n) )* c(\varphi)$\\  
    
              \item $c([Exp(\eta_1 \parallel ...  \parallel \eta_n)] \varphi) = (4 + c(Exp(\eta_1 \parallel ...  \parallel \eta_n)) )* c(\varphi)$\\  
              
              The proof of completeness is similar to the proof for Public Announcement and Action Models Logics introduced in \cite{DBLP:journals/iandc/BenthemEK06} Dynamic Epistemic Logic. We prove completeness showing that every formula in DELWCA is equivalent to formula in Epistemic Logic. In order to achieve that we only have to provide a translation function that translate every DELWCA formula to a formula without communication actions.          
            \end{itemize}
            By corollary \ref{comp-exp},   
    
            $c([\eta_1 \parallel ...  \parallel \eta_n] \varphi) > c([Exp(\eta_1 \parallel ...  \parallel \eta_n)] \varphi)$ 
        \end{itemize}
    \end{proof}
    The following lemma asserts that every formula is deductively equivalent to its translation. 
    
    \begin{lemma}\label{equivalentetraducaoLema} 
    
      For all formulas $\varphi \in {\cal L}_{DELWCA} holds:$ 
    
        \[  \vdash \varphi \eq t(\varphi) \] 
    
    \end{lemma}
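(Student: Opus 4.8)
The plan is to prove $\vdash \varphi \eq t(\varphi)$ by strong induction on the complexity $c(\varphi)$ of Definition~\ref{complexidadeAM}. The translation $t$ of Definition~\ref{traducaoAM} is designed so that every clause rewrites a formula into one that is provably equivalent to it by a single axiom and, crucially, has strictly smaller complexity; Lemma~\ref{complexidadeAMLema} supplies exactly these strict inequalities, so the recursion is well-founded. A preliminary remark: since $\langle\pi\rangle\varphi$ abbreviates $\neg[\pi]\neg\varphi$, it suffices to treat the box forms, which are what the clauses of $t$ range over.

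For the base case $\varphi = p$ (and $\varphi=\top$) we have $t(p)=p$, so $\vdash p \eq p$ holds propositionally. For the boolean and epistemic cases $\neg\psi$, $\psi_1\et\psi_2$, and $K_a\psi$, the translation commutes with the leading connective. Each immediate subformula is a genuine subformula of $\varphi$, hence of strictly smaller complexity by item~1 of Lemma~\ref{complexidadeAMLema}; the induction hypothesis gives $\vdash\psi\eq t(\psi)$ for each, and the replacement-of-equivalents rule (derivable from the normality axioms (ii) and (x) together with the necessitation rules U.G.\ for $K_a$ and $[\pi]$) propagates these equivalences through the connective to yield $\vdash\varphi\eq t(\varphi)$.

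The dynamic cases form the core, and I would handle all of them by a single three-step pattern. Suppose $\varphi$ is headed by a modality matched by one of the reduction clauses --- the action-model clauses corresponding to axioms (vi)--(ix), the PDL clauses for composition, nondeterministic choice and prefix (axioms (xi)--(xiv)), the concurrency clause (axiom (xv)), or the terminal clauses for $[\mathbf{0}]$ and for $[\amodel,\actiona][\pi]$. First, apply the matching axiom to obtain $\vdash \varphi \eq \rho$, where $\rho$ is precisely the reduct appearing on the right-hand side of the corresponding clause of $t$; for instance, when $\varphi = [\amodel,\actiona]K_a\psi$, axiom (ix) gives $\rho = (\pre(\actiona)\imp \Et_{\actiona\sim_a\actionb} K_a[\amodel,\actionb]\psi)$. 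Second, the matching item of Lemma~\ref{complexidadeAMLema} certifies $c(\varphi) > c(\rho)$, so the induction hypothesis applies to $\rho$ and gives $\vdash \rho \eq t(\rho)$. Third, by the definition of $t$ we have $t(\varphi)=t(\rho)$, so transitivity of $\eq$ through steps one and two delivers $\vdash\varphi\eq t(\varphi)$. Note that a reduct such as $\pre(\actiona)\imp\neg[\amodel,\actiona]\psi$ still contains a dynamic modality; what makes the induction legitimate is not a decrease in modal depth but the strict decrease in $c$, which is the whole point of calibrating $c([\pi]\varphi)=(4+c(\pi))\cdot c(\varphi)$ in Definition~\ref{complexidadeAM}.

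I expect the concurrency clause $[\eta_1\parallel\cdots\parallel\eta_n]\varphi$ to be the main obstacle and the place where the bookkeeping is delicate. Here the reduct is $[Exp(\eta_1\parallel\cdots\parallel\eta_n)]\varphi$, and the required strict inequality $c(\varphi) > c(\rho)$ is not immediate from the syntactic clauses alone: it rests on $c(\eta_1\parallel\cdots\parallel\eta_n) > c(Exp(\eta_1\parallel\cdots\parallel\eta_n))$, which is exactly Corollary~\ref{comp-exp}, itself obtained from Lemma~\ref{comm} (every communication strictly decreases complexity). Once that inequality is in hand, item~9 of Lemma~\ref{complexidadeAMLema} gives $c([\eta_1\parallel\cdots\parallel\eta_n]\varphi) > c([Exp(\eta_1\parallel\cdots\parallel\eta_n)]\varphi)$, the induction hypothesis reduces the expanded process --- which is built only from prefix, choice and the silent action models $\tau_{i,j}$ --- and repeated application of the already-treated PDL and action-model clauses eventually eliminates every program operator, terminating at a pure epistemic formula in ${\cal L}_K$.
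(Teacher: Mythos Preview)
Your proposal is correct and follows essentially the same approach as the paper: a strong induction on the complexity measure $c(\varphi)$, where the base and boolean/epistemic cases are immediate and each dynamic case is dispatched by the three-step pattern you describe (apply the matching reduction axiom, invoke Lemma~\ref{complexidadeAMLema} to justify the induction hypothesis on the reduct, and conclude by transitivity since $t(\varphi)=t(\rho)$). Your write-up is in fact more explicit than the paper's own proof about why the induction is well-founded and about the role of Corollary~\ref{comp-exp} in the parallel case.
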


    \begin{proof} By induction on the complexity of $\varphi$ ($c(\varphi)$).

      \noindent {\bf Induction Hypothesis}: Suppose \[  \vdash \varphi \eq t(\varphi) \] holds for formulas $\varphi$ where $c(\varphi) < m$ 
        
      \begin{enumerate} 
        
        \item Base: $\varphi = p$ follows from the tautology $\vdash p \eq p $; 
        item $\varphi = \neg \psi, ~ \psi_1 \land \psi_2 ~ K_a \psi$: straightforward from the Induction Hypothesis;     
  
        \item $\varphi = [\amodel,\actiona]  p $:  
          
          $t( [\amodel,\actiona] p ) = t ( \pre(\actiona)  \to p ))$, by lemma \ref{complexidadeAMLema} and the induction hypothesis we have 
            
          $\vdash t ( \pre(\actiona)  \to p )) \eq  \pre(\actiona)  \to p )$, but by axiom 6 
            
          $\vdash  \pre(\actiona)  \to p ) \eq  [\amodel,\actiona]  p $, and thus 
            
          $\vdash t( [\amodel,\actiona] p ) \eq  [\amodel,\actiona] p$ 
          
        \item $\varphi = [\eta_1 \parallel ...  \parallel \eta_n] \psi$: 
          
          $t([\eta_1 \parallel ...  \parallel \eta_n] \psi) = t([Exp(\eta_1 \parallel ...  \parallel \eta_n)] \psi)$, by lemma \ref{complexidadeAMLema} and the induction hypothesis we have 
            
          $\vdash t([Exp(\eta_1 \parallel ...  \parallel \eta_n)] \psi) \eq [Exp(\eta_1 \parallel ...  \parallel \eta_n)] \psi$, by axiom 13 
            
          $\vdash [\eta_1 \parallel ...  \parallel \eta_n] \psi \eq [Exp(\eta_1 \parallel ...  \parallel \eta_n)] \psi$ and thus 
            
          $\vdash t([\eta_1 \parallel ...  \parallel \eta_n] \psi) \eq ([\eta_1 \parallel ...  \parallel \eta_n] \psi$ 
          
        \item $\varphi \in \{[\alpha.\pi]\psi, [\pi_1 + \pi_2]\psi, [\pi_1 ; \pi_2]\psi, [{\bf 0}] \psi \}$ is analogous to case 4. 
              
      \end{enumerate}
    \end{proof}
  
    Completeness follows.

    \begin{theorem}[Complteness]\label{completudeP}  
    
      For all $\varphi \in {\cal L}_{DELWCA}$ 
        
      \[  \models \varphi \text{ implies }  \vdash \varphi \] 
      
    \end{theorem}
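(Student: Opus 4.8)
The plan is to reduce completeness for DELWCA to the already-established completeness of $\bf S5_a$, using the reduction machinery built in the appendix. The decisive ingredient is Lemma~\ref{equivalentetraducaoLema}, which gives $\vdash \varphi \eq t(\varphi)$ for every $\varphi \in {\cal L}_{DELWCA}$, together with the fact that the translation $t$ of Definition~\ref{traducaoAM} always outputs a formula of the pure epistemic fragment ${\cal L}_K$: each clause either strips a dynamic modality or pushes it strictly inward, and termination of this rewriting is guaranteed by the strictly decreasing complexity measure of Lemma~\ref{complexidadeAMLema}. Thus $t$ is a total map $t : {\cal L}_{DELWCA} \to {\cal L}_K$ on which the completeness of $\bf S5_a$ can be brought to bear.

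First I would assume $\models \varphi$. By soundness of the axiom system (the standard validity of axioms (i)--(xiii) from the literature, together with Lemma~\ref{soundnessLemma} for the concurrent-action axiom (xv)), deductive equivalence forces semantic equivalence; hence from $\vdash \varphi \eq t(\varphi)$ I obtain $\models \varphi \eq t(\varphi)$, and therefore $\models t(\varphi)$.

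Next, since $t(\varphi)$ is a formula of Multi-Agent Epistemic Logic, I would invoke completeness of $\bf S5_a$ (Section~\ref{mael-ax}): from $\models t(\varphi)$ we conclude $\vdash t(\varphi)$. Finally I would combine the two facts $\vdash t(\varphi)$ and $\vdash \varphi \eq t(\varphi)$; unfolding the biconditional and applying modus ponens yields $\vdash \varphi$, as required.

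The substantive labor has already been discharged in the preceding lemmas, so the completeness theorem itself is a short three-line assembly. The one point that genuinely demands care is that $t$ actually lands in ${\cal L}_K$ rather than looping: this is exactly where the decreasing-complexity results of Lemma~\ref{comm} and Corollary~\ref{comp-exp} are essential, since they certify that each application of the Expansion Law to a parallel composition produces a strictly simpler program and hence that the repeated elimination of communication and concurrency terminates. Were that termination to fail, $t$ would be undefined and the reduction to $\bf S5_a$ would collapse; so I regard the finiteness of this rewriting, already secured by Lemma~\ref{comm}, as the real crux underpinning the argument.
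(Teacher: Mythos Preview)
Your proposal is correct and follows essentially the same reduction strategy as the paper: use Lemma~\ref{equivalentetraducaoLema} together with soundness to obtain $\models t(\varphi)$, invoke completeness of $\bf S5_a$ to get $\vdash_{S5_a} t(\varphi)$ (hence $\vdash t(\varphi)$, since $\bf S5_a$ is a subsystem of DELWCA), and then combine with $\vdash \varphi \eq t(\varphi)$ via modus ponens. You even make explicit the final inference step and the role of the decreasing-complexity results in guaranteeing that $t$ terminates in ${\cal L}_K$, points the paper leaves largely implicit.
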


    \begin{proof} Suppose $\models \varphi$. By lemma \ref{equivalentetraducaoLema} we know that 
      $\vdash \varphi \eq t(\varphi)$. By soundness we have $\models \varphi \eq t(\varphi)$ and thus $\models t(\varphi)$. But as 
      $t(\varphi)$ has no action modalities, it is a formula of  Multi-agent Epistemic Logic  ${\bf S5_a}$ and as  ${\bf S5_a}$ is complete we have 
      $\vdash_{S5_a} t(\varphi)$, but as  $ {\bf S5_a}$ is contained in   {\bf DELWCA}, we have $\vdash t(\varphi)$.

    \end{proof}

\end{appendices}

\end{document}